\newlength{\trianglerightwidth}
\algnewcommand{\LineComment}[1]{\Statex \hskip\ALG@thistlm $\triangleright$ #1}
\algnewcommand{\LineCommentCont}[1]{\Statex \hskip\ALG@thistlm%
  \parbox[t]{\dimexpr\linewidth-\ALG@thistlm}{\hangindent=\trianglerightwidth \hangafter=1 \strut$\triangleright$ #1\strut}}
\newenvironment{breakablealgorithm}
  {
   \begin{center}
     \refstepcounter{algorithm}
     \hrule height.8pt depth0pt \kern2pt
     \renewcommand{\caption}[2][\relax]{
       {\raggedright\textbf{\ALG@name~\thealgorithm} ##2\par}%
       \ifx\relax##1\relax 
         \addcontentsline{loa}{algorithm}{\protect\numberline{\thealgorithm}##2}%
       \else 
         \addcontentsline{loa}{algorithm}{\protect\numberline{\thealgorithm}##1}%
       \fi
       \kern2pt\hrule\kern2pt
     }
  }{
     \kern2pt\hrule\relax
   \end{center}
  }
\newenvironment{fullversion}{}{}
\newenvironment{shortversion}{}{}
\newtheorem{theorem}{Theorem}
\newtheorem{lemma}{Lemma}
\begin{document}


\title{Catcher-Evader Games\thanks{
	The full version of this paper is available at http://arxiv.org/abs/1602.01896.
	Dmytro contributed to this paper while he was a Ph.D. student at Duke University.
}}



%
%
%
%

%

\author{
	Yuqian Li, Vincent Conitzer, Dmytro Korzhyk \\
	Department of Computer Science, Duke University \\
	\{yuqian, conitzer\}@cs.duke.edu, dima.korzhyk@gmail.com
}


\maketitle

\begin{abstract}
Algorithms for computing game-theoretic solutions have recently been
applied to a number of security domains.  However, many of the
techniques developed for compact representations of security games do
not extend to {\em Bayesian} security games, which allow us to model
uncertainty about the attacker's type.  In this paper, we introduce a
general framework of {\em catcher-evader} games that can capture
Bayesian security games as well as other game families of interest.
We show that computing Stackelberg strategies is NP-hard, but give an
algorithm for computing a Nash equilibrium that performs well in
experiments.  We also prove that the Nash equilibria of these games
satisfy the {\em interchangeability} property, so that equilibrium
selection is not an issue.
\end{abstract}

\section{Introduction}

Algorithms for computing game-theoretic solutions have long been of
interest to AI researchers.  In recent years, applications of these
techniques to security have drawn particular attention.  These
applications include airport security~\cite{Pita08:Using}, the
assignment of Federal Air Marshals to flights~\cite{Tsai09:IRIS},
scheduling Coast Guard patrols~\cite{An12:PROTECT}, scheduling
patrols on transit systems~\cite{Yin12:TRUSTSAIMAG},
and the list goes on.  Game-theoretic techniques are natural in these
domains because they involve parties with competing interests (though
the games are usually not zero-sum), and the use of mixed (randomized)
strategies to avoid being predictable to one's opponent is desirable.

These applications have typically used a {\em Stackelberg} model where
one player (the defender) commits to a mixed strategy first and the
other (the attacker) then optimally responds to this mixed strategy.
Formally, the defender (player $1$) chooses a mixed strategy
$\sigma_1^* \in \arg \max_{\sigma_1} \max_{s_2 \in
\text{BR}^2(\sigma_1)} u_1(\sigma_1,s_2)$,\footnote{Generally, if the attacker
is indifferent among multiple targets, the defender can slightly modify her
strategy to make any one of these uniquely optimal; this is why ties for the
attacker are broken in favor of the defender.}
where
$\text{BR}^2(\sigma_1)$ is the set of best responses to $\sigma_1$ for
player $2$ (i.e., the responses that maximize player $2$'s utility).
This is in contrast to the more standard solution concept of {\em Nash
equilibrium}, where both players play a mixed strategy in such a way
that each plays a best response to the other---that is, a pair
$(\sigma_1,\sigma_2)$ with $\sigma_1 \in \text{BR}^1(\sigma_2)$ and
$\sigma_2 \in \text{BR}^2(\sigma_1)$. 
Arguably, the Stackelberg solution is well motivated in contexts where the
attacker can learn the defender's strategy over time by repeated observation,
whereas if this is not the case perhaps the Nash solution is better motivated.
It is known that under certain
conditions in security games, Stackelberg strategies are also Nash equilibrium
strategies~\cite{Korzhyk11:Stackelberg}.

Initial work in these domains modeled uncertainty over attacker
preferences using the formalism of Bayesian games, assigning
probabilities to different types of
attackers. This included the original work at the airport at Los Angeles~\cite{Paruchuri08:Playing}.
However, subsequent research,
which started to focus on compact representations of security games,
mostly did not consider Bayesian games.  In this paper, we introduce a
more general framework that can capture such Bayesian security games,
and study the computation of Stackelberg and Nash solutions in them
(which in such games generally do not coincide).  Our framework can
also model certain types of {\em test games} in which a tester
randomly chooses questions from a fixed database of
questions~\cite{Li13:Game}.  We show that computing a Stackelberg
strategy is strongly NP-hard, but give an algorithm for computing Nash
equilibria that combines and expands on earlier techniques in both
security and test games.  While we have been unable to show that
our algorithm is guaranteed to require at most polynomially many
iterations, it requires few iterations in experiments.

More benefits of our framework are listed below:
(1) Our notation for Catcher-Evader\footnote{Note that these games are
completely different from {\em pursuit-evasion} (or {\em cops-and-robbers})
games~\cite{Parsons78:Pursuit,Borie09:Algorithms}.  Those games involve dynamically chasing another player on a
graph.  Our games, in contrast, occur in a single period, and concern the
computation of an optimal random assignment.} games, once one becomes familiar with
it,
	greatly simplifies analysis of those games, especially as it concerns utilities.
	For example, our notation expresses the utility delta of a target, which is often
	the crucial quantity, directly as $d$, rather than as a difference (e.g., $u_i^c - u_i^u$).
(2) Our additional parameters $a, b, c$ allow richer utility functions that
	security games did not capture previously. For example, targets may have
	different costs to defend even if the attacker does not attack them.
	Previous security game definitions always assumed no cost (or the same cost) if the attacker
	does not attack.
(3) It lets us swap the roles of defenders and attackers.
	Therefore, we can also directly compute the attacker's strategy as well as
	the defender's strategy, an example of which is computing the tester's strategy
	in test games.
(4) Its connection between security games and test games brings enormous convenience
	for algorithm design. Previously, separate algorithms had to be designed for them,
	but now we can design a single algorithm for both. Moreover,
	we can potentially apply known algorithms for each of these game families to the other.
	For example, the aforementioned Nash equilibria algorithm combines techniques for security games
	(progressively increasing defender or catcher resources) and test games
	(using network flow to reallocate attacker or evader resources).
(5) Besides security games and test games, it can also capture other interesting scenarios
	where resources must be assigned to different targets by two competing parties.
	For example, two companies, an incumbent and an entrant, might be allocating capital to
	different markets; the entrant may wish to evade the incumbent and build up market share,
	while the incumbent wants to catch the entrant to drive the latter out of business.


\section{Notation}

We model a Catcher-Evader game (CE game) as a game between one
catcher and multiple evaders.  Since we assume that the evaders do not
care about each other's actions, this is equivalent to a Bayesian game between a
single-typed catcher and an evader with multiple types. Also, as we will show
in section~\ref{subsection:swapping}, the roles of catcher and evader can be
swapped. Hence, our model also captures games between one evader and multiple
catchers.


We represent a CE game by $(N, \Psi, r, \ell, a, b, c, d)$, where $N = \{0,
1, \ldots, n\}$ is the set of players and $\Psi$ is the set of {\em sites}
(e.g., the targets in a security game or the questions in a test game).
We fix $0 \in N$ to be the catcher (e.g., the defender in a security game),
and $N^+=\{1, 2, \ldots, n\}$ to be the set of evaders (e.g., the multiple types of
attackers in a security game).  Player $i \in N$ has available a total
resource amount of $r_i \in \mathbb R^{\geq 0}$. 
For example, we might set $r_i=1$ to indicate that $i$ has only
one resource, or we might set $r_i=1/2$ to indicate that, in a Bayesian game, a type $i$ that
appears with probability $1/2$ has only a single resource, and therefore
the expected number of resources that this type contributes is $1/2$.
This resource amount can be split fractionally across the sites, for
example, $1/3$ could be assigned to one site and $2/3$ to another.
(This would typically correspond to assigning a single resource to the
former site with {\em probability} $1/3$.)
Player $i$ can assign a resource amount of at most $\ell_{i, \psi} \in
\mathbb R$ to site $\psi \in \Psi$.
For example, we might set $\ell_{i, \psi}=1$ to indicate that $i$ can
assign at most a single resource to $\psi$, or we might set $\ell_{i,
  \psi}=1/2$ to indicate that, in a Bayesian game, a type $i$ that appears
with probability $1/2$ can assign at most a single resource to $\psi$ if he
appears, and therefore his marginal contribution of probability mass to
$\psi$ is at most $1/2$.  Generally, $r_i \leq \sum_{\psi \in \Psi}
\ell_{i, \psi}$ so the player has to make a nontrivial decision about which
site gets more of the resource amount and which one gets less.

Finally, the utility is encoded by $a, b, c, d$ as follows.  Let $x$ be the
strategy profile where $x_{i,\psi}$ is the resource amount that player $i$
puts on site $\psi$. For convenience, we denote $x_{\Sigma,\psi} =
\sum_{i=1}^n x_{i,\psi}$ as the combined resource amount that all $n$
evaders put on site $\psi$. Then the utility is $\sum_{\psi \in \Psi}
\left[ (b_{0,\psi} + d_{0,\psi} x_{\Sigma,\psi})x_{0,\psi} + a_{0,\psi}
  x_{\Sigma,\psi} + c_{0,\psi} \right]$ for the catcher and $\sum_{\psi
  \in \Psi} \left[ (b_{i,\psi} + d_{i,\psi} x_{0,\psi})x_{i,\psi} +
  a_{i,\psi} x_{0,\psi} + c_{i,\psi} \right]$ for evader $i$.
Here, $b$ is the {\em base utility} for a player to put a resource at a site,
and $d$ is the {\em utility change} that results from putting a resource at
that site when the opponent puts a resource there as well.
Since $c$
({\em constant utility}) is not affected by any player's strategy, we can
ignore it (or let $c=0$) without affecting our analysis of both Stackelberg
strategies and Nash equilibrium.
Finally, $a$ (for {\em alternating utility}) is the utility that a player
receives when the opponent puts a resource at that site; the former player
cannot affect this.
Hence, for Nash equilibrium (but not for Stackelberg
strategies), we can simply drop $a$ (or let $a = 0$). We require
$\sum_{\psi \in \Psi} x_{i,\psi} = r_i$ for feasibility, as well as $d_{0,\psi} >
0$ and $d_{i,\psi} < 0$ for $i \in N^+$ so that the catcher wants to catch the
evader while the evader wants to evade.

For convenience, we define $x_{-0,\psi} = x_{\Sigma,\psi}$ and $x_{-i,\psi}
= x_{0,\psi}$ for $i \in N^+$. 
Then, we define $\mu_{i, \psi} = (b_{i,\psi} + d_{i,\psi} \cdot
x_{-i,\psi})$ as the {\em per-resource utility} of player $i$ on site
$\psi$.
That is, it is the increase in utility she experiences from putting one more resource there.
So, player $i$'s utility gained from site $\psi$ can be written as
$u_{i,\psi}(x) = \mu_{i,\psi} x_{i,\psi} + a_{i,\psi} x_{-i,\psi} +
c_{i,\psi}$. In a best-response strategy, player $i$ should have a utility
threshold $\theta_i$ such that (1) for all $\psi$ with $\mu_{i,\psi}(x) >
\theta_i$, the player maximizes the resource amount it puts there ($x_{i,\psi} =
\ell_{i,\psi}$), and (2) for all $\psi$ with $\mu_{i,\psi}(x) < \theta_i$, the
player puts no resource amount there ($x_{i,\psi}=0$).
(There is no requirement for the case $\mu_{i,\psi}(x)=\theta_i$.)
The value of $\theta_i$ is not
necessarily unique, so for definiteness, let
$\theta_0 = \max_{\psi \in \Psi: x_{0,\psi} < \ell_{0,\psi}}
\mu_{0,\psi}$ and $\theta_i = \min_{\psi \in \Psi: x_{i,\psi} > 0}
\mu_{i,\psi}$ for $i \in N^+$.

Incidentally, note that if we do not require $d_{0,\psi} > 0$ and $d_{i,\psi} <
0$ for $i \in N^+$, then $a, b, c, d$ can represent any utility function of the
form  $\sum_{\psi \in \Psi} f(x_{i,\psi}, x_{-i, \psi})$ where $f$ is a 
quadratic polynomial without factors $x_{i,\psi}^2$ or $x_{-i, \psi}^2$. 

In Table~\ref{table:symbols}, we summarize all symbols for reference.

\begin{table}
\small
\center
\begin{tabular}{c | p{0.78\linewidth}}
 & Description \\
\hline
$N$ & Set of players $\{0, 1, \ldots, n\}$ \\
$N^+$ & Evaders $\{1, 2, \ldots, n\}$ ($0$ is the catcher) \\
$\Psi$ & Set of {\em sites} (e.g., targets in security games) \\
$r_i$ & Resource of player $i$ \\
$\ell_{i, \psi}$ & Resource {\em limit} player $i$ can put on site $\psi$ \\
$a_{i, \psi}$ & Alternating utility of player $i$ on site $\psi$ \\
$b_{i, \psi}$ & Base utility of player $i$ on site $\psi$ \\
$c_{i, \psi}$ & Constant utility of player $i$ on site $\psi$ \\
$d_{i, \psi}$ & Utility change ({\em delta}) of player $i$ on site $\psi$ \\
$x_{i, \psi}$ & Amount of resource $i$ puts on $\psi$ ({\em strategy}) \\
$x_{\Sigma, \psi}$ & Sum of all evaders' resource on $\psi$ \\
$x_{-i, \psi}$ & Amount of resource $i$'s opponent puts on $\psi$ \\
$\mu_{i, \psi}$ & Per-resource utility of $i$ on $\psi$:
					$b_{i,\psi} + d_{i,\psi} x_{-i,\psi}$ \\
$u_{i, \psi}$ & Utility of $i$ on $\psi$:
			$\mu_{i,\psi} x_{i,\psi} + a_{i,\psi} x_{-i, \psi} + c_{i,\psi}$ \\
$\theta_i$ & Utility threshold of player $i$
\end{tabular}
\caption{Symbols used for CE games.}
\label{table:symbols}
\end{table}

\section{Reducing Games to CE Games}
\label{section:reduction}

In this section, we show how the framework of CE games let us capture
several game families studied previously in the literature, namely security
games and test games.

\subsection{Security Games}
\label{subsection:reduce_security}

A general definition of security games was given
by~\cite{Kiekintveld09:Computing}.  That work considered only a single
attacker resource; an attacker with multiple attacker resources was
considered by~\cite{Korzhyk11:Security}.  More generally still, we can
consider a Bayesian game in which there is uncertainty about the type of
the attacker.
(Some of the earliest work in this line of research concerned Bayesian
games~\cite{Paruchuri08:Playing,Pita09:Using}, but the games were
relatively small and so the techniques did not exploit the structure of
security games.)  We now define multi-resource Bayesian security games and
show how to reduce them to CE games.  Note that in our definition, a
resource is assigned to a single target.\footnote{
Section 6 of~\cite{Kiekintveld09:Computing} also allowed resources to be
assigned to {\em schedules} of multiple targets, which quickly leads to
NP-hardness~\cite{Korzhyk10:Complexity}.}

There are a defender and an attacker.  The latter has unknown type $i \in
\{1, \ldots, n\}$.  An attacker of type $i$ occurs with probability
$p_i$. There are $m$ targets $t_1, t_2, \ldots, t_m$.  An attacker of type
$i$ can attack $r_i$ distinct targets while the defender can defend $r_d$
distinct targets. A player's utility is the sum of its utility over
all targets.  If an attacker of type $i$ attacks an undefended target $t$,
it obtains utility $u_i^u(t)$ (and the defender obtains utility  $u_d^u(t)$).
  If it attacks a defended (covered)
target $t$, it obtains utility $u_i^c(t)$ (and the defender obtains utility
$u_d^c(t)$). 
Both players obtain utility $0$ from $t$ if $t$ is unattacked.

Now, we can reduce this to the following CE game $(N, \Psi, r', a', b',
c'=0, d')$ (see Table~\ref{table:security} for an example of utility reduction):
$
	N = \{0, 1, 2, \ldots, n\}, \Psi = \{t_1, t_2, \ldots, t_m\},
	r'_0 = r_d, r'_i = p_i r_i ~(i \in N^+),
	\ell'_{0,\psi} = 1, \ell'_{i, \psi} = p_i ~(\psi \in \Psi, i \in N^+),
	a'_{0, \psi} = u_d^u(\psi), b'_{0,\psi} = 0, d'_{0,\psi} 
		= u_d^c(\psi)-u_d^u(\psi),
	a'_{i, \psi} = 0, b'_{i,\psi} = u_i^u(\psi), d'_{i,\psi} 
		= u_i^c(\psi)-u_i^u(\psi) ~(i \in N^+)
$.

Note that in the original security game, $r$ consists of natural numbers
and a pure strategy would put either $0$ or $1$ resources on each site.  In
the CE game, the strategy profile $x_{i,\psi}$ corresponds to the marginal
probability that player $i$ puts a resource on $\psi$.  Because resources
can only be assigned to single targets, we can always use Birkhoff-von
Neumann decomposition~\cite{Birkhoff46:Tres} to generate a valid mixed
strategy of the original security game with these marginals (see
also~\cite{Korzhyk10:Complexity}).

\begin{table}
\centering
\tiny
\begin{tabular}{r | c c | c c c c}
	Player & \multicolumn{2}{c|}{
			Security Game
		}
		& \multicolumn{4}{c}{
			CE Game
		}\\
	 & $u_i^c(t)$ & $u_i^u(t)$ & $a_{i,t}$ & $b_{i,t}$ & $c_{i,t}$ & $d_{i,t}$ \\
	 \hline
	 Def ($i=0$) & 1 & -10 & -10 & 0 & 0 & 11 \\
	 Att 1 ($i=1$) & -5 & 5 & 0 & 5 & 0 & -10 \\
	 Att 2 ($i=2$) & -9 & 10 & 0 & 10 & 0 & -19
\end{tabular}
\caption{Example of how a security game's utility specification for a target $t$
is converted to a CE game's utility specification for a site $\psi = t$. In this
table, we let $u_0^c(t) = u_d^c(t), u_0^u(t) = u_d^u(t)$ for convenience.}
\label{table:security}
\end{table}

\subsection{Testing Games}

Testing games were recently studied by~\cite{Li13:Game}.  In that work,
only test takers that do not fail any questions pass the test; therefore,
it does not matter whether a test taker fails $1$ question or $100$.  In
contrast, we consider a variant---arguably more realistic---in which the
losses and gains the players experience are additive across questions.
We call this variant ``scored tests'', which captures cases like the GRE, the TOEFL,
and most course exams at school.
It allows us to bypass the (co)NP-hardness results for
computing the best test strategies from~\cite{Li13:Game}.  
On the other hand, the transformation to a zero-sum game described in that paper
no longer works in this context.


Formally, a test game is a 2-player game between a tester and a test
taker.  The tester is uncertain about the test taker's type $i \in \{1, 2,
\ldots, n\}$, but she knows that a test taker of type $i$ occurs with
probability $p_i$.  The tester has a pool of questions $Q$, from which $t$
questions will be chosen to form a test $T \subseteq Q$ ($|T| = t$). For a
test taker of type $i$, a given subset $H_i \subseteq Q$ of questions are
hard and he will not be able to solve them unless he memorizes their
answers (or writes them on a cheat sheet). However, he can memorize at most
$m_\theta$ questions, so if the tester randomizes over the choice of $T$,
there is a good chance that most questions in $T$ have not been memorized.
We denote the set of questions $i$ chooses to memorized as $M_i \subseteq Q
(|M_i| = m_i)$

So far, everything is identical to the games defined by~\cite{Li13:Game}.
Now we introduce a question score $s_q$ for each $q \in Q$. If a test taker fails
to solve $q$ in the test, $s_q$ is deducted from his score. Hence the test taker's
utility is $u_i(T, M_i) = -\sum_{q \in T \cap H_i \setminus
  M_i} s_q$.\footnote{A constant $\sum_{q \in T} s_q$ can be added to
  $u_i(T, M_i)$ to obtain the usual nonnegative test scores.}
We also introduce a weight $w_q$ for each question, representing how
important the tester thinks it is to find out whether the test taker can
solve $q$. This may or may not be equal to $s_q$.  
The tester's utility is then $u^t_i(T, M_i) = v_i \sum_{q \in T \cap H_i
  \setminus M_i} w_q$. Here, $v_i$ denotes the tester's assessment of the
importance of test taker type $i$. For example, it might be more (or less)
important to figure out the true score of a bad test taker (with large
$H_i$) than that of a good one.
 We reduce this game to the CE game $(N, \Psi, r, a, b, c=0, d)$
where
$
N = \{0, 1, 2, \ldots, n\}, \Psi = Q,
	r_0 = t, r_i = p_{i} v_{i} m_{i} \ (i \in N^+),
	\ell_{0,q} = 1, \ell_{i,q} 
		= p_{i} v_{i} \ (i \in N^+, q \in Q=\Psi),
	a_{0,q} = 0, 
		b_{0,q} = w_q \sum_{i: q \in H_{i}} p_{i} v_{i},
		d_{0,q} = -w_q,
	a_{i,q} = -s_q \text{ for } q \in H_{i}, a_{i,q} = 0 \text{ for } q \notin H_i,
	b_{i,q} = 0 \ (i \in N^+),
	d_{i,q} = s_q / r_{i,q} \text{ for } q \in H_{i},
		d_{i,q} = 0 \text{ for } q \notin H_i
$.

Similar to security games, the resulting strategy profile $x_{i,q}$ denotes
the marginal probability that a player puts $q$ on the test / memorizes
$q$; again, the Birkhoff-von Neumann theorem allows us to obtain a strategy
with these marginals.

\subsection{Swapping Roles}\label{subsection:swapping}

The reduction from test games has one issue: the utilities change at
rates $d_0 < 0, d_i > 0 \ (i \in N^+)$ but CE games require $d_0 > 0, d_i <
0 \ (i \in N^+)$. In a sense, the tester is an evader who wants to evade
by asking questions that are not memorized by the test taker; but as we
have defined them, in CE games, player $0$ is a catcher.

We handle this by redefining player $0$'s resources to their opposites.
That is, we focus on which questions she does {\em not} test.
Hence, the modified $x'_{0,q}$ will be the
marginal probability that she does {\em not} test $q$ (i.e., $q \notin T$).

In general, we can swap roles between catchers and evaders (i.e., negate $d$) by rewriting CE game 
$(N, \Psi, r, a, b, c, d)$ as CE game $(N, \Psi, r', a', b', c', d')$:
$
	r'_0 = - r_0 + \sum_{\psi \in \Psi} \ell_{0,\psi}, ~r'_i = r_i \ (i
        \in N^+),
	\ell'_{i,\psi} = \ell_{i,\psi} \ (i \in N),
	a'_{0,\psi} = a_{0,\psi} + d_{0,\psi} \ell_{0,\psi},
		c'_{0, \psi} = c_{0,\psi} + b_{0,\psi} \ell_{0,\psi},
	b'_{0, \psi} = -b_{0, \psi}, d'_{0, \psi} = -d_{0, \psi},
	a'_{i,\psi} = -a_{i,\psi}, d'_{i,\psi} = -d_{i,\psi},
	b'_{i,\psi} = b_{i,\psi} + d_{i,\psi} \ell_{0,\psi},
		c'_{i,\psi} = c_{i,\psi} + a_{i,\psi} \ell_{0,\psi}	
$

\begin{fullversion}
The correctness of this transformation is the result of the following
equations, letting $x'_{0,\psi} = \ell_{0,\psi}-x_{0,\psi}$ (note
$x'_{i,\psi} = x_{i,\psi}$ for $i \in N^+$):
\begin{align*}
	&\left[ (b'_{0,\psi} + d'_{0,\psi} x_{\Sigma,\psi})x'_{0,\psi}
			+ a'_{0,\psi} x_{\Sigma,\psi} + c'_{0,\psi} \right]\\
	&=\left[ (b_{0,\psi} + d_{0,\psi} x_{\Sigma,\psi})x_{0,\psi}
			+ a_{0,\psi} x_{\Sigma,\psi} + c_{0,\psi} \right]\\
	&\left[ (b'_{i,\psi} + d'_{i,\psi} x'_{0,\psi})x'_{i,\psi}
			+ a'_{i,\psi}x'_{0,\psi} + c'_{i,\psi} \right]\\
	&=\left[ (b_{i,\psi} + d_{i,\psi} x_{0,\psi})x_{i,\psi}
			+ a_{i,\psi} x_{0,\psi} + c_{i,\psi} \right]\\
\end{align*}
\end{fullversion}

Hence, the utilities are exactly the same as in the original game.
As previously mentioned, $c$ does not affect our game-theoretic analysis.
However, it is essential for establishing these equations so we can swap
roles. Of course, after the transformation, we can freely drop $c'$.
Table~\ref{table:swap} shows an example of a test game and
how we swap roles in it.

\begin{table}
\centering
\tiny
\subtable[An example of test game players' utility on a question $q$]{
\begin{tabular}{c | c | c }
 test taker's utility, tester's utility & don't test $q$ & test $q$ \\
 \hline
 don't memorize $q$ & 0, 0 & -5, 4\\
 \hline
 memorize $q$ & 0, 0 & 0, 0\\
\end{tabular}
}
\subtable[Swapping roles for the above example test game]{
\begin{tabular}{p{0.16\linewidth} | r | c c c c}
	& Player & $a_{i,q}$ & $b_{i,q}$ & $c_{i,q}$ & $d_{i,q}$ \\
	\hline
	\multirow{2}{\linewidth}{
		test $q$: $x_{0,q}=1$
	} & Tester ($i=0$) & 0 & 4 & 0 & -4 \\
	& Test taker ($i=1$) & -5 & 0 & 0 & 5 \\
	\hline
	\multirow{2}{\linewidth}{
		test $q$: $x_{0,q}=0$
	} & Tester ($i=0$) & -4 & -4 & 4 & 4 \\
	& Test taker ($i=1$) & 5 & 5 & -5 & -5 \\
\end{tabular}
}
\caption{Example of a test game and role swapping.}
\label{table:swap}
\end{table}

\section{Complexity of Stackelberg Strategies}

\begin{theorem}\label{theorem:P}
  If there is only one evader who can put all resources on any
  single site ($\forall \psi \in \Psi,~ \ell_{1,\psi} \geq r_1$), then
  catcher Stackelberg strategies can be computed in polynomial time.
\end{theorem}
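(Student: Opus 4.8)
The plan is to reduce the catcher's Stackelberg problem to solving a polynomial number of linear programs, one per site the evader might attack, mirroring the multiple-LP approach used for commitment strategies. First I would pin down the single evader's best response. Since $n=1$ we have $x_{\Sigma,\psi} = x_{1,\psi}$, and the only part of the evader's utility that depends on his own choice is $\sum_{\psi} \mu_{1,\psi} x_{1,\psi}$, a linear functional maximized subject to $\sum_\psi x_{1,\psi} = r_1$ and $0 \le x_{1,\psi} \le \ell_{1,\psi}$. Because $\ell_{1,\psi} \ge r_1$ for every $\psi$, the upper bounds are never binding, so the feasible region is a scaled simplex and the maximum is attained at a vertex: the evader places his entire resource $r_1$ on a single site $\psi^*$ maximizing $\mu_{1,\psi^*} = b_{1,\psi^*} + d_{1,\psi^*} x_{0,\psi^*}$.

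Next, for each candidate attacked site $\psi^* \in \Psi$, I would write the catcher's problem of inducing $\psi^*$ as an LP in $x_0$. Substituting $x_{1,\psi^*} = r_1$ and $x_{1,\psi}=0$ for $\psi \ne \psi^*$ into the catcher's utility makes it linear in $x_0$: up to an additive constant it equals $(b_{0,\psi^*} + d_{0,\psi^*} r_1)x_{0,\psi^*} + \sum_{\psi \ne \psi^*} b_{0,\psi} x_{0,\psi}$. The constraints are the catcher's feasibility constraints ($\sum_\psi x_{0,\psi} = r_0$ and $0 \le x_{0,\psi} \le \ell_{0,\psi}$) together with the incentive constraints making $\psi^*$ a best response, namely $b_{1,\psi^*} + d_{1,\psi^*} x_{0,\psi^*} \ge b_{1,\psi} + d_{1,\psi} x_{0,\psi}$ for every $\psi$; these are all linear in $x_0$. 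I would solve this LP for each of the $|\Psi|$ choices of $\psi^*$ (discarding infeasible ones) and return the $x_0$ achieving the best objective. Since each LP has polynomial size and there are $|\Psi|$ of them, the whole procedure runs in polynomial time.

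The step requiring the most care, and the main obstacle, is verifying that the maximum over these LPs equals the true catcher Stackelberg value under the convention that the evader breaks ties in the catcher's favor. For the upper bound, given any $x_0$ and any evader best response, the catcher's utility is itself linear in $x_1$ (it equals $\sum_\psi [b_{0,\psi} x_{0,\psi} + c_{0,\psi}] + \sum_\psi [d_{0,\psi} x_{0,\psi} + a_{0,\psi}] x_{1,\psi}$), so among the evader's best responses, which form a face of the simplex over the sites attaining $\max_\psi \mu_{1,\psi}$, the catcher prefers a vertex, i.e.\ an all-on-$\psi^*$ response with $\psi^*$ in the argmax; this $x_0$ is feasible for the LP of $\psi^*$, so the Stackelberg value is at most the best LP value. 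For the lower bound, any $x_0$ feasible in the LP of $\psi^*$ satisfies the incentive constraints, so the all-on-$\psi^*$ response is a genuine best response and the achieved utility is attainable in the Stackelberg game. Finally, I would note that the weak incentive inequalities, combined with the perturbation argument in the footnote (the catcher can shift mass infinitesimally to make $\psi^*$ strictly optimal), reconcile the LP's closed feasible region with the tie-breaking convention, so no value is lost.
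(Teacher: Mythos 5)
Your proposal is correct and follows essentially the same route as the paper's proof: one LP per candidate best-response site $\psi^*$, with the evader's incentive constraints added, taking the best feasible solution over all $|\Psi|$ LPs. The paper's version is only a sketch citing the standard multiple-LP technique; your write-up fills in the details (in particular, why the evader may be assumed to concentrate on a single site and why ties resolve in the catcher's favor) consistently with what the paper intends.
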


The proof of Theorem~\ref{theorem:P}
\begin{shortversion}
(in the full version of this paper)
\end{shortversion}
uses a by now fairly standard linear program technique.

\begin{fullversion}

\begin{proof}
  There exists an optimal solution where the evader will assign all its
  resources to the same best-response site $\psi^*$.  For each such site
  $\psi^*$, we can write a linear program that produces the optimal Stackelberg
  strategy under the constraint of $\psi^*$ being a best response; the best of
  these solutions overall will be the Stackelberg strategy.  (See
  also~\cite{Conitzer06:Computing,Korzhyk10:Complexity}.)
\end{proof}

\end{fullversion}

In contrast, it has been shown that computing Stackelberg strategies in a
multi-resource security game (even with only a single type, i.e.,
non-Bayesian) is (weakly) NP-hard~\cite{Korzhyk11:Security}.  Hence, by
our reduction of such security games to CE games, even if the CE game has
only one evader ($n=1$), it is (weakly) NP-hard to compute Stackelberg
strategies if we allow $\ell_{1,\psi} < r_1$ (so the evader/attacker will
put resources on multiple sites).

Next, we show that even if $\ell_{i,\psi} \geq r_i$ for all $i \geq 1$, it
is strongly NP-hard to compute Stackelberg strategies if we allow $n > 1$.  This
corresponds to the case of a Bayesian security game in which each attacker
has only a single resource. Note that the initial LAX airport paper~\cite{Paruchuri08:Playing}
assumed a Bayesian security game with a single attacker resource.
To our best knowledge, no hardness result has
been given for computing Stackelberg strategies of such games.
Also, unlike the known weak
NP-hardness result for multiple resources, this rules out
pseudopolynomial-time algorithms.
\begin{shortversion}
The proof is in the full version of this paper to save space.
\end{shortversion}

\begin{theorem} \label{theorem:Stackelberg}
  Computing Stackelberg strategies in Bayesian security games is strongly
  NP-hard even if each attacker type has only a single resource.
  Consequently, computing Stackelberg strategies in a Catcher-Evader game
  is strongly NP-hard (if $n > 1$), even if $\ell_{i,\psi} \geq r_i$ for all
  $i \in N^+$. (This result is tight in the sense that this problem is also in NP.)
\end{theorem}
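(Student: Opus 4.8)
The plan is to establish strong NP-hardness by reducing from a problem that is NP-hard already under unary encoding --- I will use INDEPENDENT SET (equivalently VERTEX COVER), deciding whether a graph $G=(V,E)$ with $|V|=n$ has an independent set of size at least $k$ --- and to use only small constant payoffs together with the uniform type probabilities $1/n$, so that every number in the constructed instance is polynomially bounded and the resulting hardness rules out pseudopolynomial algorithms. The consequence for CE games is then immediate from Section~\ref{subsection:reduce_security}: that reduction preserves every player's utility exactly, and when each attacker type has a single resource ($r_i=1$) it produces $\ell'_{i,\psi}=p_i=r'_i$, so $\ell_{i,\psi}\ge r_i$ holds for all $i\in N^+$; hence the Stackelberg strategies of the CE instance coincide with those of the security game and the hardness transfers (for $n>1$).

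For the reduction I would build a Bayesian security game in which the defender's resources let it fully cover at most a bounded number of targets, one target is associated with each vertex of $G$, and the attacker types --- following the commitment-hardness template of~\cite{Conitzer06:Computing} but re-expressed in the security-game payoff structure --- are engineered so that a covered configuration is profitable for the defender exactly when the covered targets correspond to an independent set. Concretely, I would introduce one attacker type per edge, occurring with equal probability, whose best response inflicts a large penalty on the defender precisely when both endpoints of that edge are covered, while full coverage of a mutually non-adjacent set of targets yields the defender a fixed per-target reward. Throughout, the utilities must respect the monotonicity the CE framework imposes, namely $u_d^c>u_d^u$ and $u_i^c<u_i^u$ (i.e.\ $d_{0,\psi}>0$ and $d_{i,\psi}<0$); realizing the ``penalty only on adjacent covered pairs'' behaviour through payoffs that depend only on single attacked targets, under these sign constraints, is itself a nontrivial gadget-design step.

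With a suitable value threshold $V^*$, the claim to prove is that the optimal Stackelberg value is at least $V^*$ iff $\alpha(G)\ge k$. The forward direction is direct: an independent set of size $k$ gives an integral coverage whose induced best responses attain $V^*$. The reverse direction is where I expect the real difficulty. Because the defender commits to a \emph{mixed} strategy, its coverage vector is fractional, so I must rule out the possibility that a spread-out, partially covering strategy beats the combinatorial optimum; equivalently, I must show that any coverage achieving value $\ge V^*$ can be rounded to an integral coverage supported on an independent set of size $\ge k$ without loss of defender value. Designing the deterrence as a genuine step function --- so that partial coverage of a target earns no partial credit and fractional hedging across adjacent targets is strictly suboptimal --- and then proving this lossless rounding is the technical heart of the argument.

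Finally, for membership in NP (which makes the result tight) I would reuse the ``guess the best-response configuration'' idea behind Theorem~\ref{theorem:P}: a certificate names, for each of the attacker types, the single target it attacks, which is a polynomial-size object. Given this guess, one writes a polynomial-size linear program asserting that some defender mixed strategy makes each named target a best response for its type --- its expected attacker utility dominating that of every other target --- while giving the defender expected utility at least $V^*$. Such an LP is solvable in polynomial time and its solutions have polynomial bit-length, so the certificate can be verified efficiently, placing the problem in NP.
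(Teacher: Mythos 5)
Your outline correctly locates the two hard steps---engineering an attacker type whose best response punishes the defender exactly when two specified targets are both covered, and ruling out fractional hedging by the defender---but it leaves both of them undone, and they are essentially the entire content of the proof. The missing idea, which the paper's reduction (from Satisfiability rather than Independent Set, though that difference is inessential) is built on, is this: introduce auxiliary targets at which the defender's payoff does \emph{not} depend on coverage (a ``bonus'' target worth $1$ and a ``punishment'' target worth a large negative amount), and calibrate each attacker type's payoffs so that its indifference point between a variable-target and the relevant auxiliary target falls exactly at coverage $0$ or exactly at coverage $1$. Because the attacker's expected utility is linear in the coverage of the attacked target and ties are broken in favor of the defender under the Stackelberg convention, this yields precisely the step-function deterrence you say you need: a ``paired'' or ``clause'' type defects to the punishment target as soon as \emph{every} target it cares about has any positive coverage, while a ``counting'' type switches to the bonus target (rewarding the defender) only when its variable-target is covered with probability exactly $1$. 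These gadgets simultaneously force any optimal coverage to be integral---strictly fractional coverage earns no bonus from the counting types yet already triggers the paired and clause penalties---so no separate ``lossless rounding'' lemma is needed. Without exhibiting payoffs that realize these thresholds under the sign constraints $u_d^c>u_d^u$ and $u_i^c<u_i^u$, your construction restates the goal rather than proving it.

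Two secondary points. First, your one-type-per-edge plan still needs a mechanism that counts how many targets are covered with probability $1$ (the paper devotes $2n$ dedicated counting types paired with the bonus target to this), and the punishment payoff must be chosen polynomially bounded (any magnitude exceeding the number of types suffices) for the \emph{strong} NP-hardness claim to survive; ``small constant payoffs'' alone will not do. Second, your NP-membership argument---guess each type's best-response target and verify feasibility and value with a single polynomial-size LP---is sound and matches the standard technique the paper invokes for Theorem~1.
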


\begin{fullversion}
\begin{proof}
  The reduction is from Satisfiability.  In a Satisfiability instance,
  there are $n$ boolean variables and $m$ clauses. Each clause includes a
  subset of the variables and/or their negations. The 
  problem is to decide whether there is an assignment of true/false values
  to the variables such that each clause has at least one literal set to
  true.
We reduce a Satisfiability instance to a security game as follows.


There are $2n+2$ targets: 

(a) $2n$ targets corresponding to the variables and
their negations. We will call these \emph{variable-targets}. The defender gets a utility of $0$ if any of these targets
is attacked, no matter if it is defended or not.\\
(b) One \emph{punishment-target}. The defender gets $-\infty$ utility
if this target is attacked, no matter if it is defended or not.\\
(c) One \emph{bonus-target}. The defender gets a utility of $1$
if this target is attacked, no matter if it is defended or not.

The defender has $n$ resources. The attacker types will be set up in such a
way that if the Satisfiability instance has a solution then the optimal
defender strategy is to defend the targets which correspond to the
negative-valued literals with probability $1$, and the defender gets a
utility of $u^\ast$ as a result. (The value of $u^\ast$ will be defined
below.) If the Satisfiability instance has no solution then the defender's
Stackelberg utility is necessarily below $u^\ast$.

There are $3n+m$ attacker types:

(a) $2n$ attacker types whose job is to count how many of the
  variable-targets are covered with probability $1$ (\emph{counting-types}). Each of these attacker
  types is interested in one variable-target and the bonus-target. If the
  variable-target is covered with probability $1$ then the attacker
  (weakly) prefers
  the bonus-target and the defender gets a utility of $1$; otherwise, the
  attacker prefers the variable-target and the defender gets a utility of
  $0$.\\
(b) $n$ attacker types which make sure that for literals $x_i$ and $\neg
  x_i$, no more than one corresponding target is covered with probability
  $1$ (\emph{paired types}). Each of these attacker types is interested in
  one pair of variable-targets and in the punishment-target. If both
  variable-targets are defended with positive probability then the attacker
  chooses the punishment target, and the defender gets a utility of
  $-\infty$.  Otherwise, the attacker (weakly) prefers the variable-target
  with $0$ probability.\\
(c) $m$ attacker types which check whether the clauses are satisfied
  (\emph{clause-types}). Each clause-type is interested in all targets
  corresponding to the variables in a clause and in the punishment
  target. If all the targets in the clause are defended with positive
  probability (meaning their values are all false) then the attacker
  chooses the punishment target and the defender gets a utility of
  $-\infty$. If there is any target defended with probability $0$ in the
  clause then the attacker (weakly) prefers that target and the defender
  gets $0$ (meaning the clause is satisfied).

The probability of each type is $1/(3n+m)$.  We claim that the defender's
optimal utility is $u^\ast = n/(3n+m)$ if and only if the Satisfiability
instance has a solution.

\textbf{(satisfiable $\Rightarrow$ $u^*$ is feasible)} The defender can
defend, with probability $1$, the targets that correspond to the literals
set to false. All paired-types and clause-types will attack
variable-targets, from which the defender gets $0$ utility. Exactly $n$ of
the counting-types will attack the bonus target (each giving the defender a
utility of $1$) and the remaining $n$ counting-types will attack
variable-targets (each giving the defender a utility of $0$). Hence, the
defender's total utility will be $n/(3n+m)$.

\textbf{($u^*$ is feasible $\Rightarrow$ satisfiable)} Note that only the
bonus-target gives the defender  positive utility, and only the
counting-types are interested in the bonus-target. For the defender to get
$n/(3n+m)$ or more, there must be $n$ or more counting-types attacking
the bonus target. That means there must be $n$ or more variable-targets
defended with probability $1$. Since the defender has only $n$ resources,
there must be exactly $n$ variable-targets defended with probability $1$,
and all the other targets must be defended with probability $0$. The
paired-types enforce that no two of those targets correspond to a variable
and its negation. The clause-types enforce that each clause has at least
one target defended with probability $0$. Hence, to get the Satisfiability
solution, we can set to true the variables corresponding to the targets
defended with probability $0$.
\end{proof}
\end{fullversion}

\section{Interchangeability of NE}
\label{se:interchangeability}

We now move on to studying Nash equilibria.  In general, a downside of the
Nash equilibrium concept is that Nash equilibria can fail {\em
  interchangeability}: if one player plays according to one Nash
equilibrium and the other according to another, the result may not be a
Nash equilibrium.  However, it has been shown that interchangeability of
Nash equilibria is guaranteed in security games and test games under
certain
conditions~\cite{Korzhyk11:Stackelberg,Korzhyk11:Security,Li13:Game}.  We
now show that this also holds for CE games.
\begin{shortversion}
The key lemma and theorem are shown below. Their proofs are in the
full version to save space.
\end{shortversion}

\begin{lemma} \label{lemma:same}
  For each site $\psi$, either $x_{0,\psi}$ is the same
  for all NE or $x_{\Sigma,\psi}$ is the same for all NE.
\end{lemma}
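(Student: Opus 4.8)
The plan is to fix two Nash equilibria $x$ and $y$ and show that at every site $\psi$ at least one of $\delta_\psi := x_{0,\psi}-y_{0,\psi}$ and $\epsilon_\psi := x_{\Sigma,\psi}-y_{\Sigma,\psi}$ vanishes. The starting observation is that, since we may drop $a$ and $c$ for equilibrium analysis, each player's payoff is linear in its own allocation: the catcher's best response maximizes $\sum_\psi \mu_{0,\psi}(x)\, z_{0,\psi}$ over the feasible region $\{0\le z_{0,\psi}\le\ell_{0,\psi},\ \sum_\psi z_{0,\psi}=r_0\}$, and likewise for each evader with $\mu_{i,\psi}$. Optimality of $x_0$ against $x$ and of $y_0$ against $y$, together with mutual feasibility, gives $\sum_\psi \mu_{0,\psi}(x)(x_{0,\psi}-y_{0,\psi})\ge 0$ and $\sum_\psi \mu_{0,\psi}(y)(y_{0,\psi}-x_{0,\psi})\ge 0$; adding these and using $\mu_{0,\psi}(x)-\mu_{0,\psi}(y)=d_{0,\psi}\epsilon_\psi$ yields the variational inequality $C := \sum_\psi d_{0,\psi}\,\delta_\psi\,\epsilon_\psi \ge 0$, in which every weight $d_{0,\psi}>0$.

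The heart of the argument is to prove the per-site antagonism $\delta_\psi\epsilon_\psi\le 0$: whenever the catcher covers $\psi$ more under $x$ than under $y$, the evaders collectively put no more mass there, and symmetrically. The mechanism is that $\mu_{i,\psi}=b_{i,\psi}+d_{i,\psi}x_{0,\psi}$ with $d_{i,\psi}<0$, so raising $x_{0,\psi}$ makes $\psi$ strictly less attractive to every evader simultaneously; combined with each evader's threshold characterization ($x_{i,\psi}=\ell_{i,\psi}$ above $\theta_i$, $x_{i,\psi}=0$ below) and conservation of each evader's total budget $r_i$, this should push the aggregate $x_{\Sigma,\psi}$ down. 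Granting this, the conclusion is immediate: every summand of $C$ is then $\le 0$, yet $C\ge 0$ and $d_{0,\psi}>0$, so each $\delta_\psi\epsilon_\psi=0$, which is exactly the claim. Note that the weak antagonism alone is strictly weaker than the lemma (it permits negative products), and it is precisely the catcher's inequality $C\ge 0$ that upgrades it to the asserted exact equality.

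The main obstacle is this per-site antagonism, which is subtle for two reasons. First, the two strategies are determined simultaneously, so I cannot treat ``the catcher raised coverage'' as an exogenous comparative static; both $x_0$ and $x_\Sigma$ are endogenous. The naive move of summing the evaders' analogous variational inequalities fails here: it produces $\sum_\psi \delta_\psi \sum_i |d_{i,\psi}|(x_{i,\psi}-y_{i,\psi}) \le 0$ with per-evader weights $|d_{i,\psi}|$ that do not collapse into the unweighted aggregate $\epsilon_\psi=\sum_i (x_{i,\psi}-y_{i,\psi})$. Second, the evaders are heterogeneous, so even though $\psi$ becomes less attractive to each of them, a shift in some evader's threshold could a priori let that type pile on while the catcher is in fact chasing a different type. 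I expect to resolve this not through a single inequality but through the explicit threshold/water-filling description of both equilibria together with budget conservation --- an exchange (or network-flow) argument that tracks where displaced mass can travel --- mirroring the ``reallocate evader resources by network flow'' idea already used for the algorithm.
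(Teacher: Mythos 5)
Your catcher-side variational inequality is correct (summing the two optimality conditions of $x_0$ against $x$ and $y_0$ against $y$ does give $\sum_\psi d_{0,\psi}\,\delta_\psi\,\epsilon_\psi \ge 0$), and the closing logic is valid: if in addition $\delta_\psi\epsilon_\psi\le 0$ held at every site, then positivity of the weights $d_{0,\psi}$ would force every product to vanish. But the proof is not complete, because the per-site antagonism $\delta_\psi\epsilon_\psi\le 0$ --- which you yourself flag as the main obstacle --- is essentially the entire content of the lemma, and you do not supply an argument for it; you only conjecture that an exchange/network-flow argument exists. Excluding the same-sign configuration (catcher covers $\psi$ strictly less \emph{and} aggregate evader mass at $\psi$ strictly decreases) cannot be done by local reasoning at $\psi$: where the displaced evader mass travels is governed by each evader's global threshold and budget. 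This is exactly where the paper's proof does its work: it defines $\Psi^-=\{\psi : \Delta x_{0,\psi}<0,\ \Delta x_{\Sigma,\psi}\le 0\}$, uses budget conservation $\sum_\psi \Delta x_{i,\psi}=0$ to find an evader $i$ moving mass from some $\psi^-\in\Psi^-$ to some $\psi^+\notin\Psi^-$, and chains threshold inequalities ($\Delta x_{i,\psi^-}<0$ and $\Delta x_{0,\psi^-}<0$ give $\theta_i'>\theta_i$, hence $\mu'_{i,\psi^+}>\mu_{i,\psi^+}$, hence $x'_{0,\psi^+}<x_{0,\psi^+}$ and then $\Delta x_{\Sigma,\psi^+}>0$) to manufacture a site of mixed sign. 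Without that step, or an equivalent one, your argument does not go through.

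It is worth recording what your route would buy if the same-sign exclusion were supplied: the paper disposes of the mixed-sign sites by a separate threshold contradiction (a site with $\Delta x_{0,\psi}<0,\ \Delta x_{\Sigma,\psi}>0$ forces $\theta_0'>\theta_0$, while one with the opposite signs forces $\theta_0'<\theta_0$), and its chain of implications among the four cases is what produces both. Your inequality $C\ge 0$ would replace that second half in one line, which is a genuinely cleaner ending. But the same-sign half is where the Bayesian difficulty lives --- as you correctly observe, each evader's optimality condition carries its own weight $|d_{i,\psi}|$, so the evaders' variational inequalities do not aggregate into a statement about the unweighted sum $\epsilon_\psi$ --- and that half is missing. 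Two smaller points: the lemma quantifies over all NE, so after the pairwise statement you still need the short three-equilibria argument (the paper's step from condition (A) to the lemma); and you should double-check that both same-sign cases, not just one, are excluded, since your final step needs $\delta_\psi\epsilon_\psi\le 0$ at \emph{every} site simultaneously.
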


\begin{fullversion}
\begin{proof}
We first show that the lemma is equivalent to (A) there
cannot be two NE $x$ and $x'$ and some $\psi$ such that $x_{0,\psi} \neq
x_{0,\psi}'$ and $x_{\Sigma,\psi} \neq x_{\Sigma,\psi}'$.

It is straightforward to see that the lemma implies (A). Now we
prove that (A) implies the lemma. Suppose that there exists a site $\psi$ that
makes the lemma false. Then pick any NE $x$. There must be two NEs $x', x''$
such that $x_{0,\psi} \neq x'_{0,\psi}$ and $x_{\Sigma,\psi} \neq x''_{\Sigma,\psi}$.
If $x_{\Sigma,\psi} \neq x'_{\Sigma,\psi}$ or $x_{0,\psi} \neq x''_{0,\psi}$,
then (A) is false. If not, $x'_{\Sigma,\psi} = x_{\Sigma,\psi} \neq x''_{\Sigma,\psi}$ and
$x''_{0,\psi} = x_{0,\psi} \neq x'_{0,\psi}$, then (A) is still false using $x'$ and $x''$.
This completes the proof.

We now define 4 possible cases that would contradict (A):
\begin{align*}
--&: x_{0,\psi}' < x_{0,\psi}, x_{\Sigma,\psi}' < x_{\Sigma,\psi} \\
-+&: x_{0,\psi}' < x_{0,\psi}, x_{\Sigma,\psi}' > x_{\Sigma,\psi} \\
++&: x_{0,\psi}' > x_{0,\psi}, x_{\Sigma,\psi}' > x_{\Sigma,\psi} \\
+-&: x_{0,\psi}' > x_{0,\psi}, x_{\Sigma,\psi}' < x_{\Sigma,\psi}
\end{align*}

We prove that if any one of these 4 cases occurs, all 4 cases must all
occur (on some targets).  This results in a contradiction, because the $-+$
case implies $\theta_0' > \theta_0$ while the $+-$ case implies $\theta_0'
< \theta_0$. Hence none of the cases can occur, and our theorem holds.

Case $-+$ implying case $++$ and case $+-$ implying case $--$ can be proved
in the same way as before~\cite{Korzhyk11:Security} because the catcher is
not Bayesian.  Therefore, we focus on proving that case $--$ implies case
$-+$. (The proof of case $++$ implying case $+-$ is symmetric.)

Assume that $\psi_1$ is of case $--$.  Let $\Delta x_{i,\psi} =
x_{i,\psi}'-x_{i,\psi} \ (i \in N^+)$ and $\Delta x_{\Sigma,\psi} =
x_{\Sigma,\psi}' - x_{\Sigma,\psi}$. Then $\Delta x_{0, \psi_1} < 0$ and
$\Delta x_{\Sigma, \psi_1} < 0$. Let $\Psi^- = \lbrace \psi ~|~ \Delta
x_{0,\psi} < 0, \Delta x_{\Sigma, \psi} \leq 0 \rbrace$.

We have $\sum_{\psi \in \Psi^-} \Delta x_{\Sigma,\psi} = \sum_{i=1}^n
\sum_{\psi \in \Psi^-} \Delta x_{i,\psi} < 0$ (because of $\psi_1$). Then
$\sum_{\psi \in \Psi^-} \Delta x_{i,\psi} < 0$ must be true for some
$i$. Let $\Psi^+ = \Psi \setminus \Psi^-$. We have $\sum_{\psi \in \Psi^+}
\Delta x_{i,\psi} > 0$ because $\sum_{\psi \in \Psi} \Delta x_{i,\psi} =
\sum_{\psi \in \Psi^-} \Delta x_{i,\psi} + \sum_{\psi \in \Psi^+} \Delta
x_{i,\psi} = 0$.  Therefore, there exist some $\psi^- \in \Psi^-$ and
$\psi^+ \in \Psi^+$ such that $\Delta x_{i,\psi^-} < 0$ and $\Delta
x_{i,\psi^+} > 0$.

Note that $\Delta x_{i,\psi^-} < 0$ implies $\theta_i' > \theta_i$
($\theta_i' \geq \mu_{i,\psi^-}' > \mu_{i,\psi^-} \geq \theta_i $ where the
weak inequalities come from $\Delta x_{i,\psi^-} < 0$ and the strict
inequality comes from $\Delta x_{0,\psi^-} < 0$). Then, we have
$\mu_{i,\psi^+}' \geq \theta_i' > \theta_i \geq \mu_{i,\psi^+}$, where the
weak inequalities come from $\Delta x_{i,\psi^+} > 0$.
From this it follows that
$x_{0,\psi^+}' < x_{0,\psi^+}$.
This then implies $\Delta x_{\Sigma,\psi^+} > 0$ (otherwise $\psi^+ \in
\Psi^-$), which allows us to conclude that case $-+$ holds for $\psi^+$.
\end{proof}
\end{fullversion}

\begin{theorem}
The Nash equilibria (NE) of a Catcher-Evader game are interchangeable.
That is, if $x$ and $x'$ are two Nash equilibrium strategy profiles,
then so is $x''$ where $x''_{0,\psi} = x_{0,\psi}$ and
$x''_{i,\psi} = x'_{i,\psi} \ (i \in N^+)$.
\end{theorem}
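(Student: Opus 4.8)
The plan is to verify directly that the hybrid profile $x''$ satisfies the best-response (threshold) conditions for every player. Since $a$ and $c$ can be dropped for Nash equilibria, each player's incentives are captured entirely by the per-resource utilities $\mu_{i,\psi}$ and a threshold $\theta_i$. The essential structural observation is a ``crossing'': because $x''_{0,\psi}=x_{0,\psi}$ and $x''_{\Sigma,\psi}=x'_{\Sigma,\psi}$, the catcher's per-resource utilities in $x''$ coincide with those in $x'$ (so $\mu''_{0,\psi}=\mu'_{0,\psi}$), while every evader's per-resource utilities in $x''$ coincide with those in $x$ (so $\mu''_{i,\psi}=\mu_{i,\psi}$). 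Hence it suffices to show (i) that the catcher's $x$-strategy $x_{0,\cdot}$ best-responds to the combined evader mass $x'_{\Sigma,\cdot}$, and (ii) that each evader's $x'$-strategy $x'_{i,\cdot}$ best-responds to the catcher coverage $x_{0,\cdot}$.

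First I would feed in Lemma~\ref{lemma:same}: every site lies in $S_0$ (where $x_{0,\psi}=x'_{0,\psi}$) or in $S_\Sigma$ (where $x_{\Sigma,\psi}=x'_{\Sigma,\psi}$), with $S_0\cup S_\Sigma=\Psi$. The interesting set is $D=S_\Sigma\setminus S_0$, on which the catcher moves but the total evader mass is fixed, so $\mu_{0,\psi}=\mu'_{0,\psi}$ there. The trivial case $x_{0,\cdot}=x'_{0,\cdot}$ gives $x''=x'$ immediately, so I would assume $x_{0,\cdot}\neq x'_{0,\cdot}$.

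The cleanest step is pinning the catcher's threshold: I claim $\theta_0=\theta'_0$. If instead $\theta'_0>\theta_0$, then on each site of $D$ the equality $\mu_{0,\psi}=\mu'_{0,\psi}$ together with the two best-response thresholds forces $x'_{0,\psi}<x_{0,\psi}$ (a site with $x'_{0,\psi}>x_{0,\psi}$ would yield $\theta'_0\le\mu'_{0,\psi}=\mu_{0,\psi}\le\theta_0$), while off $D$ the catcher is unchanged; resource conservation $\sum_\psi x_{0,\psi}=\sum_\psi x'_{0,\psi}=r_0$ then forces $D=\emptyset$, i.e.\ $x_{0,\cdot}=x'_{0,\cdot}$, a contradiction (symmetrically for $\theta'_0<\theta_0$). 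The same computation shows the catcher is indifferent on $D$, i.e.\ $\mu_{0,\psi}=\theta_0$ there. With $\theta_0=\theta'_0$, check (i) follows by validating the single threshold $\theta_0$ site-by-site: on $S_\Sigma$ the catcher plays its $x$-strategy against the identical incentive $\mu_{0,\psi}$, and on $S_0$ it plays its $x'$-strategy against $\mu'_{0,\psi}$, so in both regions the threshold conditions carry over.

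The main obstacle is the evader check (ii). On $S_0$ each evader faces its unchanged incentive ($\mu_{i,\psi}=\mu'_{i,\psi}$), so its $x'$-strategy is correct there for free; the difficulty is $D$, where the evader plays $x'_{i,\cdot}$ against the genuinely different $x$-incentive $\mu_{i,\cdot}$. Unlike the catcher, an evader's play on $D$ is not pinned by a matching incentive, so one cannot simply transplant a threshold. I expect to rule out a best-response inversion --- an unfilled site $\psi$ and a used site $\phi$ with $\mu_{i,\psi}>\mu_{i,\phi}$ --- by noting that such an inversion reverses the $\mu'_i$-ordering guaranteed by $x'_{i,\cdot}$ being a best response in $x'$, which forces a catcher change on $\psi$ or $\phi$ (hence membership in $D$), and then deriving a contradiction through an exchange/flow argument over the evaders' resources that mirrors the step proving case $--$ implies case $-+$ in the proof of Lemma~\ref{lemma:same}, now invoking the catcher's indifference on $D$ and the per-evader feasibility $\sum_\psi x'_{i,\psi}=r_i$. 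Closing this flow argument cleanly is the crux; everything else is bookkeeping with thresholds.
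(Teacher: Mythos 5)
Your overall strategy---verify the threshold conditions for the hybrid profile directly, using Lemma~\ref{lemma:same} to split $\Psi$ into sites where the catcher's allocation is invariant and sites where the aggregate evader mass is invariant---is sound, and it is essentially the route the paper takes (the paper's own proof just cites the interchangeability argument of Korzhyk et al.\ for multi-resource security games and substitutes Lemma~\ref{lemma:same} for the corresponding lemma there). Your catcher-side verification, check (i), is complete and correct: the observation that $\mu''_{0,\psi}=\mu'_{0,\psi}$ and $\mu''_{i,\psi}=\mu_{i,\psi}$, the resource-conservation argument forcing $\theta_0=\theta'_0$ (since $\theta'_0>\theta_0$ would force $x'_{0,\psi}<x_{0,\psi}$ on every site where the catcher differs), and the resulting indifference $\mu_{0,\psi}=\theta_0$ on $D$ are all right and do establish that $x_{0,\cdot}$ best-responds in $x''$.

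However, check (ii)---that each evader's $x'$-strategy best-responds to the coverage $x_{0,\cdot}$---is the substantive content of the theorem in the multi-evader (Bayesian) setting, and you have not proved it; you explicitly defer it to an unspecified ``exchange/flow argument'' and call it the crux. The gap is real: on $D$ the incentives genuinely differ ($d_{i,\psi}<0$ and $x_{0,\psi}\neq x'_{0,\psi}$ give $\mu_{i,\psi}\neq\mu'_{i,\psi}$), and Lemma~\ref{lemma:same} pins only the aggregate $x_{\Sigma,\psi}$ there, not the individual $x'_{i,\psi}$, so the threshold-transplant trick that worked for the catcher does not apply. At minimum you would need to show that each evader's threshold is invariant across equilibria ($\theta_i=\theta'_i$) together with a consistency statement on $D$, neither of which you establish. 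Moreover, the argument you gesture at cannot simply mirror the step showing that case $--$ implies case $-+$ in the proof of Lemma~\ref{lemma:same}: that step compares two profiles both already known to be equilibria and uses the best-response property of each, whereas here the equilibrium property of the hybrid is precisely what is to be proved. Until this verification is written out, the proof is incomplete.
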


\begin{fullversion}
\begin{proof}
  The proof is similar to the one given by~\cite{Korzhyk11:Security} for
  interchangeability in (non-Bayesian) security games with multiple
  attacker resources.  Only the proof of Lemma 4 in that paper needs to be
  modified to our Lemma~\ref{lemma:same}. The remaining reasoning is unchanged.
\end{proof}
\end{fullversion}

\section{Computing Nash Equilibrium}

\begin{shortversion}
\begin{figure*}
\centering
	\includegraphics[width=\linewidth]{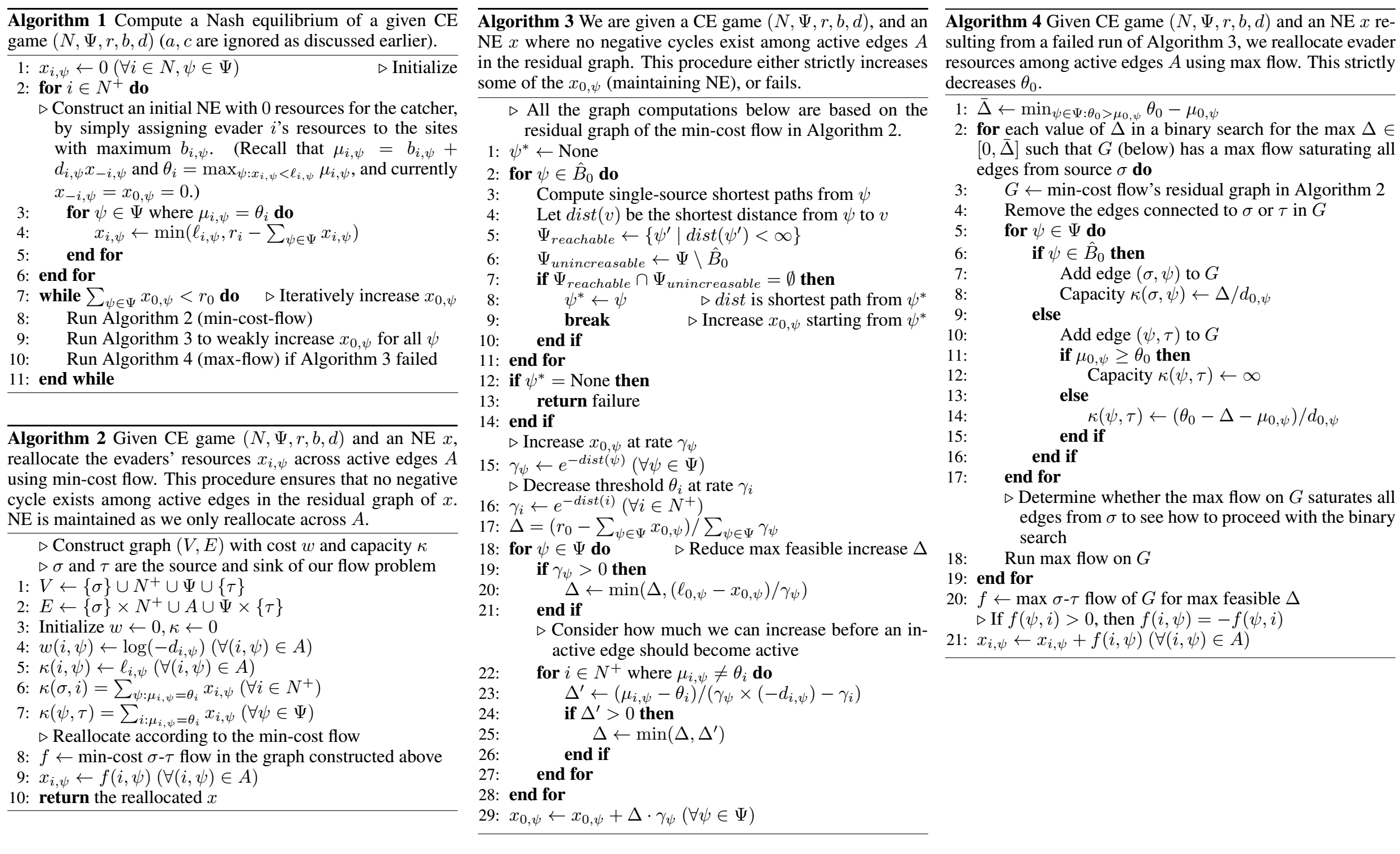}
\end{figure*}
\end{shortversion}

The interchangeability established in the previous section
provides good motivation for computing a Nash equilibrium in this domain.
In this section, we provide an algorithm for doing so.  The algorithm is
significantly more involved than earlier algorithms, notably requiring
a min-cost-flow subroutine.  This is perhaps surprising as earlier
algorithms---e.g., the one by~\cite{Korzhyk11:Security} for computing a
Nash equilibrium in non-Bayesian security games with multiple attacker
resources---do not need to do so.  However, in
the next subsection, we show it is possible to reduce the problem
of finding a minimum-cost fractional matching to our games, suggesting that
this complexity is inherent in the problem.  We have been unable to either
give a polynomial upper bound on the number of iterations of our algorithm
(each iteration takes polynomial time), or any class of instances that
results in superpolynomially many iterations.  We only give an exponential
upper bound.  However, as we will show, in experiments few iterations
suffice.

\subsection{Reducing from Min-Cost Matching}
\label{su:matching}

We show that computing an NE in CE games (even with single-resource
evaders, i.e., $\forall i \in N^+,~ \ell_{i,\psi} \geq r_i$) is as hard as
computing minimum-cost fractional\footnote{Of course, network flow problems
  have an integrality property---but not if the input is fractional, as we
  allow here.}  matchings---a common type of flow problem---suggesting that
we are unlikely to find a linear-time
algorithm.  Some of the ideas in the reduction, in particular having costs
in the graphs corresponding to the logarithms of utility change rates $d$,
will also appear in the algorithm we present later.

\begin{theorem}
  Computing a Nash equilibrium of a CE game is as hard as computing a
  minimum-cost fractional matching of a weighted bipartite
  graph. Specifically, if there is a Nash equilibrium finding algorithm
  that runs in $T(I)$ time, where $I$ is the input size of the CE game,
  then we can solve the matching problem in $T(O(I'))$ time, where $I'$ is
  the input size of the bipartite graph. So computing a NE
  is not possible in linear time unless there is a linear algorithm for
  matching.
\end{theorem}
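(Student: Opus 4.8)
The plan is to encode a minimum-cost fractional bipartite matching instance directly into the best-response (complementary-slackness) structure of a CE game, exploiting the logarithm trick foreshadowed in the excerpt to convert the \emph{multiplicative} threshold test of an evader into the \emph{additive} potential condition of matching. First I would fix the matching LP precisely: given a weighted bipartite graph with left vertices $U$, right vertices $V$, edge costs $c_{uv}$, edge upper bounds, supplies on $U$ and capacities on $V$, write the primal that minimizes $\sum_{(u,v)} c_{uv} y_{uv}$ and record its complementary-slackness conditions. There exist potentials $p_u$ (left) and $q_v$ (right) such that, edge by edge, $y_{uv}$ sits at its upper bound when the reduced cost $c_{uv} - p_u + q_v$ is negative, and at $0$ when it is positive.

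Second, I would build the CE game with one evader per left vertex ($N^+ = U$) and one site per right vertex ($\Psi = V$), identifying each evader allocation $x_{i,\psi}$ with the matching variable $y_{uv}$. Setting $b_{i,\psi} = 0$ and $d_{i,\psi} = -e^{c_{i\psi}}$ (so that $d_{i,\psi} < 0$ as required and $\log(-d_{i,\psi}) = c_{i\psi}$), the evader's per-resource utility collapses to $\mu_{i,\psi} = d_{i,\psi}\, x_{0,\psi}$, and its threshold test $\mu_{i,\psi} \gtrless \theta_i$ rearranges, after taking logarithms, to $c_{i\psi} + \log x_{0,\psi} \lessgtr \log(-\theta_i)$. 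Reading $q_\psi := \log x_{0,\psi}$ and $p_i := \log(-\theta_i)$ identifies these with the matching potentials exactly, so the evaders' best responses reproduce the reduced-cost sign conditions. The supply constraints $\sum_\psi x_{i,\psi} = r_i$ carry over the left-vertex supplies, and the limits $\ell_{i,\psi}$ carry over the edge capacities.

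Third, I would design the catcher's parameters ($r_0$, $\ell_{0,\psi}$, $b_{0,\psi}$, $d_{0,\psi} > 0$) so that the catcher's own best response pins $x_{0,\psi}$ to the values $e^{q_\psi}$ dictated by the optimal duals, while the catcher's threshold condition (comparing $x_{\Sigma,\psi}$ against $(\theta_0 - b_{0,\psi})/d_{0,\psi}$) enforces the right-vertex capacity constraints. Granting this, from any Nash equilibrium $x$ I would set $y_{i\psi} = x_{i,\psi}$ and check that $(y;p,q)$ satisfies every complementary-slackness condition, so $y$ is a minimum-cost fractional matching; because the optimal value is unique (and by the interchangeability developed earlier it does not matter which equilibrium is returned), the reduction is robust to equilibrium multiplicity. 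A routine count then shows the game has $O(|E|)$ numeric parameters and $O(|U| + |V|)$ players and sites, giving input size $I = O(I')$, hence the $T(O(I'))$ bound and the linear-time corollary.

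The main obstacle I anticipate is the catcher side. I must choose the catcher's utility so that equilibrium actually drives each $x_{0,\psi}$ into the strictly positive range needed for $\log x_{0,\psi}$ to realize arbitrary potentials $q_\psi$, and simultaneously so that the catcher's coverage constraints encode the right-hand capacities --- all while respecting $d_{0,\psi} > 0$ and avoiding degenerate equilibria where some $x_{0,\psi} = 0$ (which would make both the logarithm and the correspondence break down). Related care is needed to guarantee that the constructed game admits an equilibrium whose evader marginals form a \emph{feasible} matching, so that the instance handed to the NE subroutine is never vacuous.
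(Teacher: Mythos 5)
Your evader-side encoding is exactly the paper's: one evader per left vertex, one site per right vertex, $b_{i,\psi}=0$, $d_{i,\psi}=-e^{w(i,\psi)}$, and the observation that taking logarithms of the threshold inequalities turns the multiplicative best-response test into additive reduced costs. Your complementary-slackness formulation (potentials $q_\psi=\log x_{0,\psi}$, $p_i=\log(-\theta_i)$) is the dual form of what the paper actually verifies, namely that the residual graph of the equilibrium flow has no negative cycle: the paper picks a hypothetical negative cycle $u_1\to v_1\to\cdots\to u_m\to v_m\to u_1$, multiplies the inequalities $x_{0,v_k}|d_{u_k,v_k}|\geq|\theta_{u_k}|$ and $x_{0,v_k}|d_{u_{k+1},v_k}|\leq|\theta_{u_{k+1}}|$ around it so the thresholds and the $x_{0,v_k}$ telescope away, and takes logs to contradict negativity. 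These are equivalent arguments, so up to this point you are on the paper's track.

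The genuine gap is the piece you yourself flag as ``the main obstacle,'' and your proposed way around it is backwards. You say you would design the catcher so that its best response ``pins $x_{0,\psi}$ to the values $e^{q_\psi}$ dictated by the optimal duals.'' Taken literally this is circular: the reduction must be constructed before the matching instance is solved, so you cannot bake optimal dual values into the game, and you also should not try to --- the logic runs in the other direction. Whatever coverage $x_{0,\psi}$ arises in equilibrium automatically supplies valid potentials via the threshold conditions; you only need every NE to have $x_{0,\psi}>0$ so that the logarithms and the cancellation are legitimate. The paper achieves this with an essentially arbitrary catcher: $r_0=1$, $\ell_{0,v}=1$, $b_{0,v}=0$, $d_{0,v}=1/\kappa_v$, plus a two-line argument that if some $x_{0,v}=0$ then every evader strictly prefers $v$ (its per-resource utility there is $0$, versus strictly negative anywhere the catcher has positive coverage), which forces $x_{\Sigma,v}>0$ and hence $\mu_{0,v}>0$, so the catcher would strictly prefer to move resources onto $v$ --- contradicting its best response. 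Nothing about the catcher needs to ``enforce the right-vertex capacity constraints'' through its threshold; the flow problem is set up so that only the left supplies $r_u=\kappa_u$ and the edge capacities $\ell_{u,v}=\kappa(u,v)$ are hard constraints, and optimality is certified by the cycle/potential condition alone. Once you replace your third step with this simpler positivity argument, the rest of your plan (feasibility correspondence, $I=O(I')$ size count) matches the paper and goes through.
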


\begin{proof}
  We reduce the matching instance to a CE game whose NE can be
  straightforwardly translated back to an optimal solution to the matching
  instance. The reduction takes linear time, resulting in the bound in the
  theorem.

  Let the matching instance be on a bipartite graph with vertices $U = \{1,
  \ldots, n\}$ and $V$.  Each vertex $v$ has a capacity $\kappa_v$, with
  $\sum_{u \in U} \kappa_u = \sum_{v \in V} \kappa_v$.  Each edge $(u,v)$
  has a capacity $\kappa(u,v)$ and a cost $w(u,v)$.  Our goal is to saturate
  all the vertices' capacities at minimum cost.  Equivalently, this is a flow
  problem where $\sum_{u \in U} \kappa_u$ flow must be pushed across the
  bipartite graph at minimum cost.

  We construct a CE game $(N, \Psi, r, a, b, c=0, d)$ where $N=\{0, 1, 2,
  \ldots, n\}$ (so $N^+=U$), $\Psi=V$, $r_0 = 1$ and $r_u = \kappa_u$ for
  all $u \in U$, $\ell_{0,v} = 1$ and $\ell_{u,v}= \kappa(u,v)$ for all $u
  \in U$ and $v \in V$, $b_{i,v} = 0$ for all $i \in N, v \in V$, $d_{0,v}
  = 1/\kappa_v$ and $d_{u,v} = -e^{w(u,v)}$ for all $u \in U$ and $v \in V$.

  First, we note that the game has a feasible strategy for the evaders if
  and only if the matching problem has a feasible solution.  This is
  because a feasible strategy $x_{u,v}$ corresponds exactly to a feasible
  matching solution.


  Second, $x_{0,v} > 0$ must hold for all $v$. Otherwise, because
  $b_{u,v}=0$ and $d_{u,v} < 0$, all evaders will strictly prefer targets
  with $x_{0,v} = 0$; but then the catcher would not be best-responding,
  because $b_{0,v}=0$ and $d_{0,v}>0$.


  Finally, we show that the NE $x$ must constitute an optimal solution to
  the matching problem.  That is, if we let $W = \sum_{u \in U, v \in V}
  x_{u,v} w(u,v)$ then $W$ is the minimum cost in the matching problem.
  Suppose not; then, when interpreting $x_{u,v}$ as a flow, in the residual
  graph of that flow, a negative cycle exists. Let that cycle be $u_1
  \rightarrow v_1 \rightarrow u_2 \rightarrow v_2 \rightarrow \ldots
  \rightarrow u_m \rightarrow v_m \rightarrow u_1$ with $\sum_{1 \leq k
    \leq m} (w(u_k, v_k)-w(u_{k+1}, v_k)) < 0$, $x_{u_k, v_k} < \kappa(u_k,
  v_k)$, and $x_{u_{k+1}, v_k} > 0$ for all $1 \leq k \leq m$ (letting
  $u_{m+1} = u_1$).  Recall that $\theta_u$ is the {\em per-resource
    utility} threshold for evader $u$. So, $\mu_{u_k,v_k} = x_{0, v_k}
  d_{u_k, v_k} \leq \theta_{u_k}$ and $\mu_{u_{k+1}, v_k} = x_{0, v_k}
  d_{u_{k+1}, v_k} \geq \theta_{u_{k+1}}$.  Equivalently,\\ $x_{0, v_k}
  |d_{u_{k+1}, v_k}| \leq |\theta_{u_{k+1}}|$ and $|\theta_{u_k}| \leq
  x_{0, v_k} |d_{u_k, v_k}|$ because $d_{u,v} < 0$. It then follows
  that $$\prod_{k=1}^m x_{0, v_k} |d_{u_{k+1}, v_k}| \cdot
  |\theta_{u_k}| \leq \prod_{k=1}^m x_{0, v_k} |d_{u_k, v_k}| \cdot
  |\theta_{u_{k+1}}|$$ which implies $\prod_{k=1}^m |d_{u_{k+1}, v_k}|
  \leq \prod_{k=1}^m |d_{u_k, v_k}|$ because $x_{0,v} > 0$ for all
  $v$ and thus $|\theta_u| > 0$ for all $u \in U$.  Taking the logarithm on
  both sides, we obtain $\sum_{1\leq k \leq m} (w(u_k, v_k)-w(u_{k+1},
  v_k)) \geq 0$, contradicting the negative cycle assumption $\sum_{1\leq k
    \leq m} (w(u_k, v_k)-w(u_{k+1}, v_k)) < 0$.
%
\end{proof}

\subsection{Algorithm}

\begin{figure*}
\centering
\subfigure[Average total running time]{
	\includegraphics[trim=0 20 0 50, width=.23\linewidth]{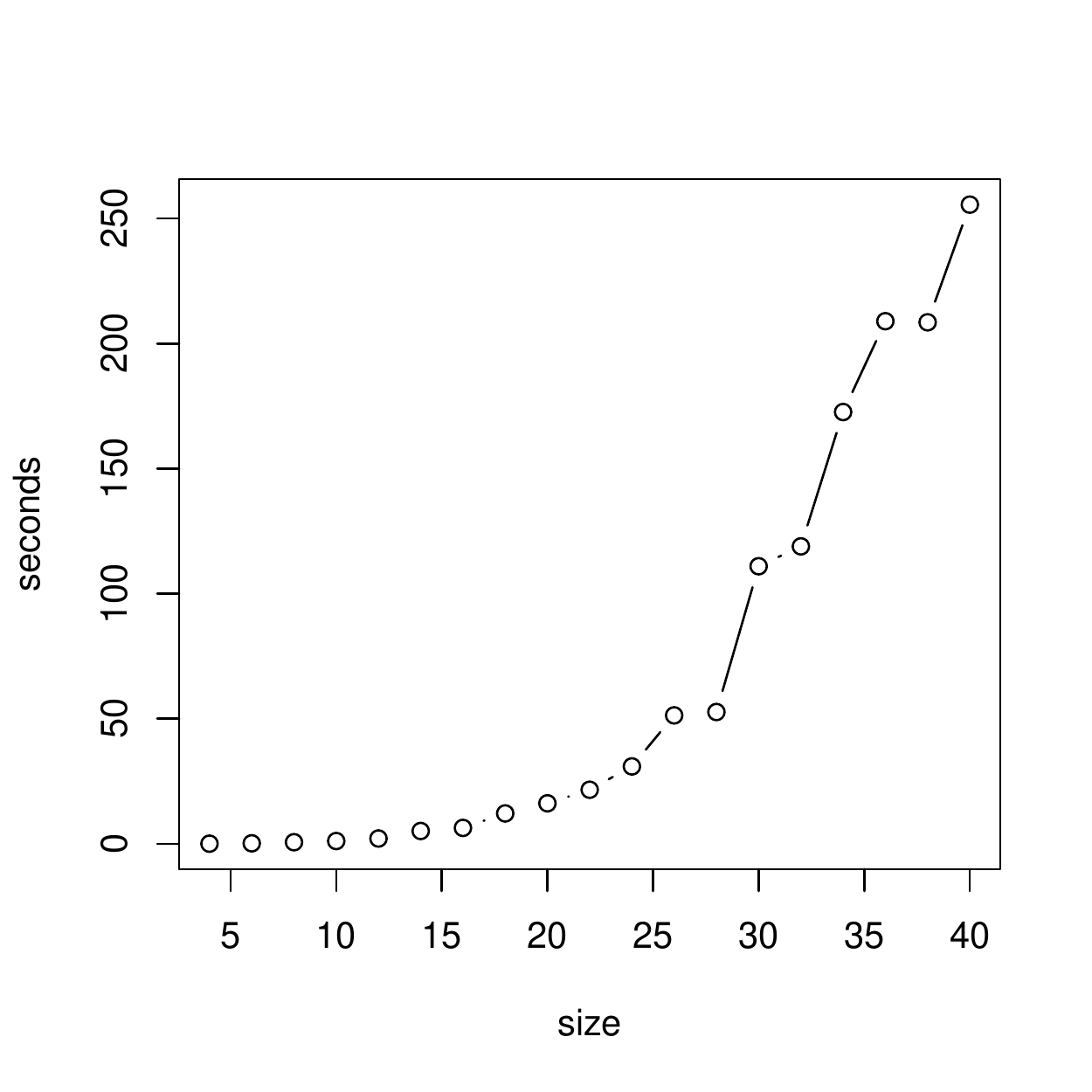}
	\label{fig:time}
}
\subfigure[Average time per iteration]{
	\includegraphics[trim=0 20 0 50, width=.23\linewidth]{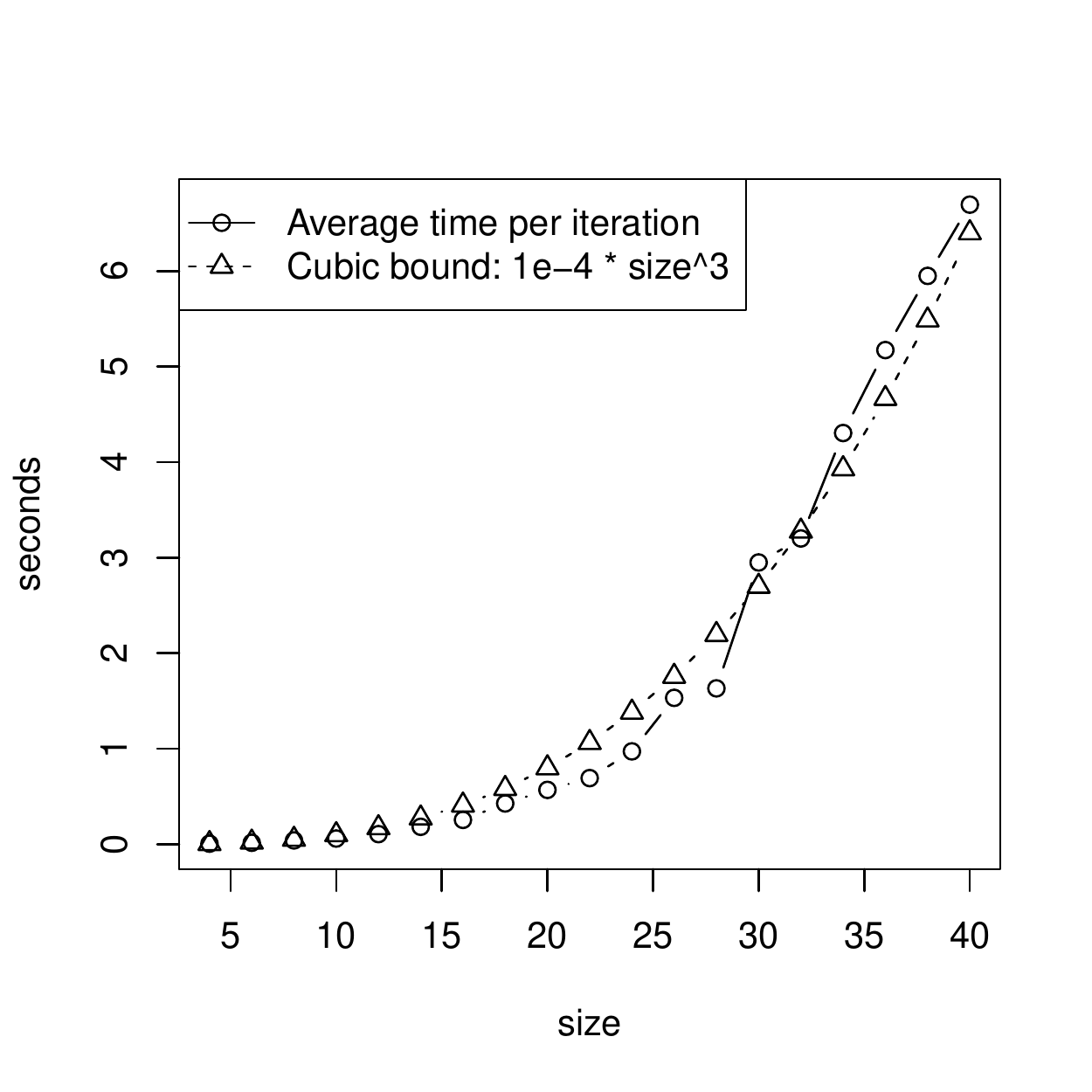}
	\label{fig:iter_time}
}
\subfigure[Number of iterations]{
	\includegraphics[trim=0 20 0 50, width=.23\linewidth]{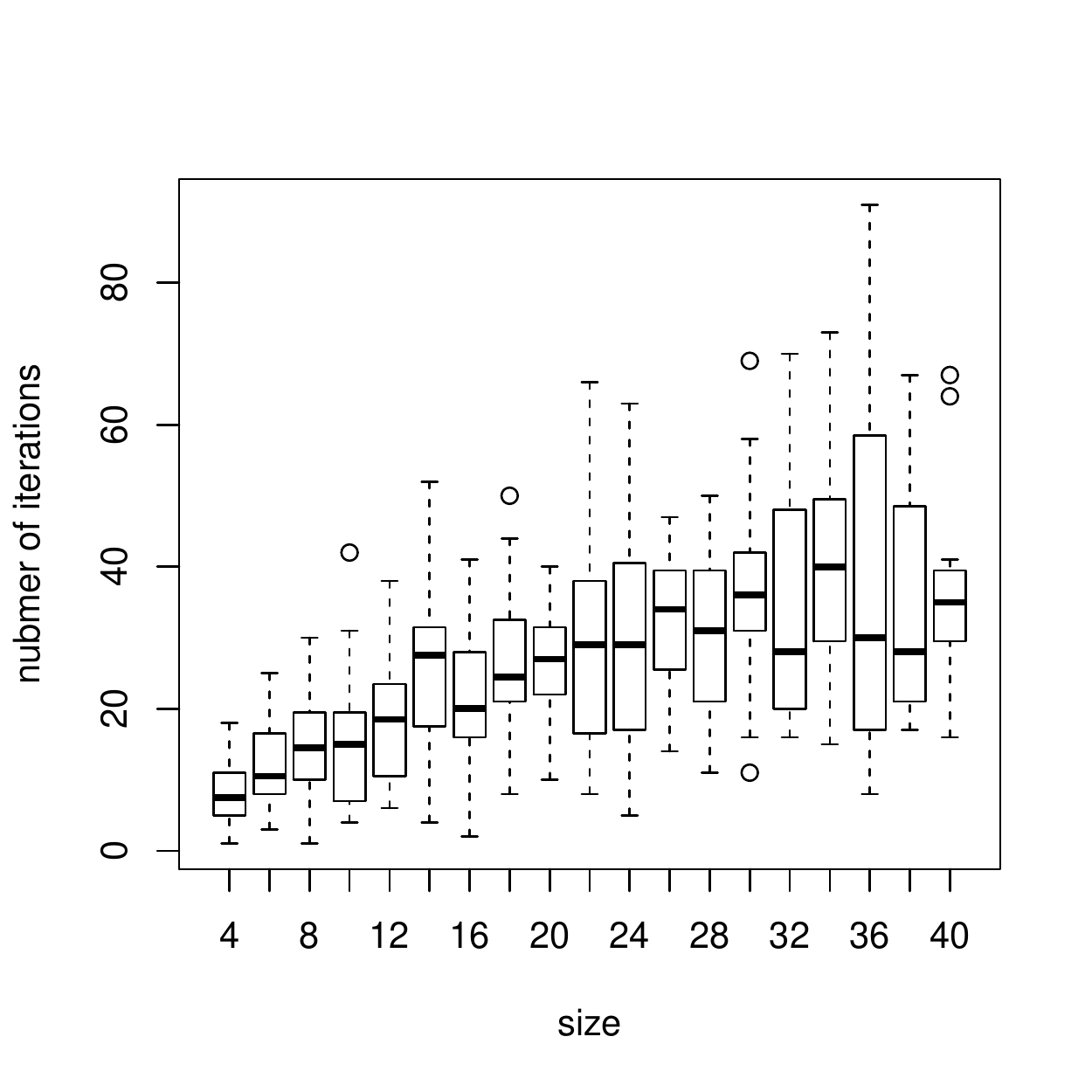}
	\label{fig:exp1}
}
\subfigure[Size of normal form]{
	\includegraphics[trim=0 20 0 50, width=.23\linewidth]{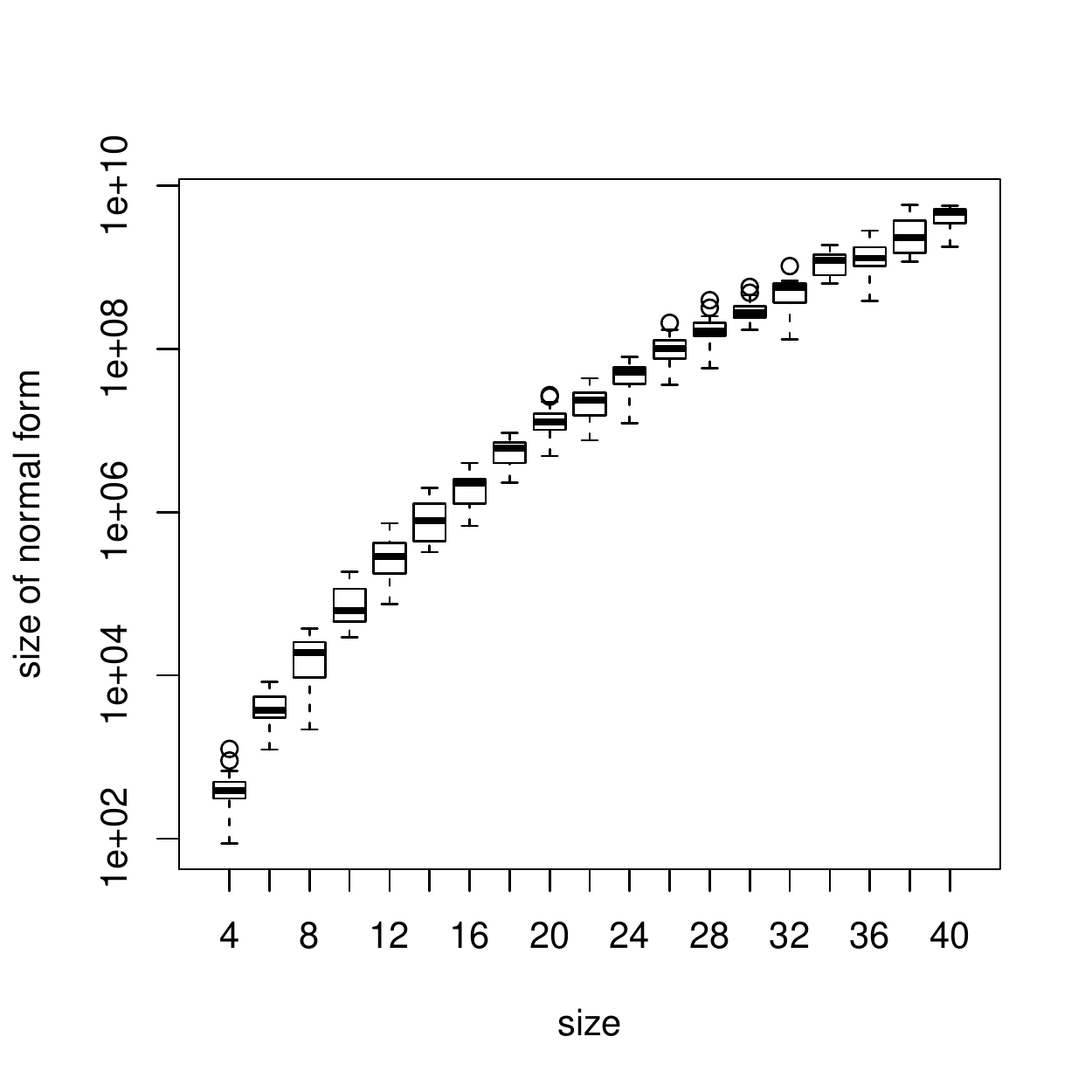}
	\label{fig:exp2}
}
\caption{
Performance of Algorithm 1, and the size
of the normal form, for randomly generated CE games.
}
\end{figure*}

We now present our algorithm.  The algorithm works by initializing the
catcher's resource amount to $0$ and gradually increasing it to $r_0$,
maintaining the equilibrium throughout.  The same high-level approach was
used by an earlier paper~\cite{Korzhyk11:Security} for the case of a single attacker type
(evader) with multiple resources, obtaining an efficient algorithm there.
However, the case with multiple evaders is significantly more involved.
The (polynomial-time) algorithm given in~\cite{Korzhyk11:Security} did not
require anything like a network-flow subroutine, whereas the reduction in
section~\ref{su:matching} suggests that this is necessary when we have
multiple evaders.  Our algorithm also incorporates ideas used in
the context of test games~\cite{Li13:Game}, specifically the binary
search and max-flow techniques used there.

We first introduce some notation. Let $B_i = \{\psi ~|~ \mu_{i,\psi}
= \theta_i\}$ be the {\em boundary sites} of player $i$. Let $B_i^+ =
\{\psi \in B_i \land x_{i,\psi} > 0\}$ be evader $i \ (i \in N^+)$'s {\em
  positive} boundary sites, whose resource amount can be reduced. Let $\hat
B_0 = \{\psi \in B_0 \land x_{0,\psi} < \ell_{0,\psi}\}$ be the catcher's {\em
  open} boundary sites, to which more resources could be assigned. Let the
{\em active edges} be $A = \{(i, \psi) ~|~ i \in N^+, \psi \in \Psi, \psi
\in B_i\}$.

The main algorithm is Algorithm 1. After initializing, the
algorithm repeatedly loops through Algorithms 2, 3, and 4, which
together provably (eventually) increase the catcher's (allocated)
resource amount while maintaining equilibrium.
Algorithm 2 ensures that a ``no negative cycle'' property holds by
solving a min-cost flow problem (since the residual flow
of a min-cost flow cannot have a negative cycle).
Here, the relationship between the evaders' best responses and the
min-cost flow's ``no
negative cycle'' property is similar to the reduction from min-cost
matching that we gave earlier.
Given that no negative cycle
remains, Algorithm 3 then attempts to increase the catcher's resource
amount---that is, for each $\psi$ it attempts to
 increase $x_{0,\psi}$---without
breaking the evaders' best-response conditions.
However,
Algorithm 3 can still fail to increase the catcher's resource amount
even without negative
cycles.  If so, we call Algorithm 4, which will either allow the next
run of Algorithm
3 to strictly increase the catcher's resource amount, or change the
open boundary sites $\hat{B}_0$ (which provably
cannot happen too often).
Specifically, if Algorithm 3 failed to increase the catcher's resource
amount, we have to reroute evaders' resources
among their best-response sites, in a way that strictly decreases the
catcher's utility threshold $\theta_0$. Such rerouting must also
maintain the catcher's best-response condition.  For this we use max-flow
and binary search: first, we binary search on $\Delta$ , the decrease
in $\theta_0$; then, we calculate each edge's rerouting capacity using
$\Delta$, and see whether a max-flow can saturate
all capacities, thereby maintaining the best-response condition.

\begin{fullversion}
\begin{breakablealgorithm}
\caption{Compute a Nash equilibrium of a given CE game $(N, \Psi, r, b, d)$
($a, c$ are ignored as discussed earlier).}
\label{algorithm:main}
\begin{algorithmic}[1]

\State $x_{i,\psi} \gets 0 \ (\forall i \in N, \psi \in \Psi)$ \Comment{Initialize}
\For {$i \in N^+$}
\LineCommentCont{Construct an initial NE with $0$ resources for the
  catcher, by
	simply assigning evader $i$'s resources to the sites with maximum $b_{i,\psi}$.
	(Recall that $\mu_{i,\psi} = b_{i,\psi}+d_{i,\psi}x_{-i,\psi}$ and
	$\theta_i = \max_{\psi: x_{i,\psi} < \ell_{i,\psi}} \mu_{i,\psi}$,
        and currently
	$x_{-i,\psi} = x_{0,\psi} = 0$.)
}
	\For {$\psi \in \Psi$ where $\mu_{i,\psi} = \theta_i$}
		\State $x_{i,\psi} \gets \min(\ell_{i,\psi}, r_i-\sum_{\psi \in \Psi} x_{i,\psi})$
	\EndFor
\EndFor

\While {$\sum_{\psi \in \Psi} x_{0, \psi} < r_0$} \Comment{Iteratively increase $x_{0,\psi}$}
	\State Run Algorithm~\ref{algorithm:min-cost-flow} (min-cost-flow)
	\State Run Algorithm~\ref{algorithm:increase} to weakly increase
        $x_{0,\psi}$ for all $\psi$
	\State Run Algorithm~\ref{algorithm:max-flow} (max-flow) if Algorithm~\ref{algorithm:increase} failed
\EndWhile

\end{algorithmic}
\end{breakablealgorithm}

\begin{breakablealgorithm}
\caption{Given CE game $(N, \Psi, r, b, d)$ and an NE $x$, reallocate the evaders'
resources $x_{i,\psi}$ across active edges $A$ using min-cost flow. This
procedure ensures that no negative cycle exists among active edges in the
residual graph of $x$. NE is maintained as we only reallocate across $A$.}
\label{algorithm:min-cost-flow}

\begin{algorithmic}[1]
\LineComment{Construct graph $(V,E)$ with cost $w$ and capacity $\kappa$}
\LineComment{$\sigma$ and $\tau$ are the source and sink of our flow problem}
\State $V \gets \{\sigma\} \cup N^+ \cup \Psi \cup \{\tau\}$
\State $E \gets \{\sigma\} \times N^+ \cup A \cup \Psi \times \{\tau\}$
\State Initialize $w \gets 0, \kappa \gets 0$
\State $w(i,\psi) \gets \log (-d_{i,\psi}) ~(\forall (i,\psi) \in A)$
\State $\kappa(i,\psi) \gets \ell_{i,\psi} ~(\forall (i,\psi) \in A)$
\State $\kappa(\sigma, i) = \sum_{\psi: \mu_{i,\psi} = \theta_i} x_{i,\psi} ~(\forall i \in N^+)$
\State $\kappa(\psi, \tau) = \sum_{i: \mu_{i,\psi} = \theta_i} x_{i,\psi} ~(\forall \psi \in \Psi)$
\LineComment{Reallocate according to the min-cost flow}
\State $f \gets $ min-cost $\sigma$-$\tau$ flow in the graph constructed above
\State $x_{i,\psi} \gets f(i,\psi) ~(\forall (i, \psi) \in A)$
\State \Return the reallocated $x$
\end{algorithmic}

\end{breakablealgorithm}

\begin{breakablealgorithm}
\caption{We are given a CE game $(N, \Psi, r, b, d)$, and an NE $x$ where no
negative cycles exist among active edges $A$
in the residual graph. This procedure either strictly increases some
of the 
$x_{0,\psi}$ (maintaining NE), or fails.}
\label{algorithm:increase}

\begin{algorithmic}[1]
\LineCommentCont{All the graph computations below are based on the
	residual graph of the min-cost flow in
	Algorithm~\ref{algorithm:min-cost-flow}.}
\State $\psi^* \gets $ None
\For {$\psi \in \hat B_0$}
	\State Compute single-source shortest paths from $\psi$
	\State Let $dist(v)$ be the shortest distance from $\psi$ to $v$
	\State $\Psi_{reachable} \gets \{\psi ~|~ dist(\psi) < \infty \}$
	\State $\Psi_{unincreasable} \gets \Psi \setminus \hat B_0$
	\If {$\Psi_{reachable} \cap \Psi_{unincreasable} = \emptyset$}
		\State $\psi^* \gets \psi$ \Comment{$dist$ is shortest path from $\psi^*$}
		\State \textbf{break} \Comment{Increase $x_{0,\psi}$ starting from $\psi^*$}
	\EndIf
\EndFor

\If {$\psi^* = $ None}
	\State \Return failure
\EndIf

\LineComment{Increase $x_{0,\psi}$ at rate $\gamma_\psi$}
\State $\gamma_\psi \gets e^{-dist(\psi)} ~(\forall \psi \in \Psi)$
\LineComment{Decrease threshold $\theta_{i}$ at rate $\gamma_i$}
\State $\gamma_i \gets e^{-dist(i)} ~(\forall i \in N^+)$

\State $\Delta = (r_0 - \sum_{\psi \in \Psi} x_{0,\psi})
	/ \sum_{\psi \in \Psi} \gamma_\psi$
\For {$\psi \in \Psi$} \Comment{Reduce max feasible increase $\Delta$}
	\If {$\gamma_\psi > 0$}
		\State $\Delta \gets \min(\Delta, 
				(\ell_{0,\psi}-x_{0,\psi})/\gamma_\psi)$ 
				\label{line:saturate-constraint}
	\EndIf
\LineCommentCont{Consider how much we can increase before an inactive edge
  should become active}
	\For {$i \in N^+$ where $\mu_{i,\psi} \neq \theta_i$}
		\State $\Delta' \gets (\mu_{i,\psi}-\theta_i)
				/(\gamma_\psi \times (-d_{i,\psi})-\gamma_i)$
				\label{line:new-active-constraint}
		\If {$\Delta' > 0$} 	
			\State $\Delta \gets \min(\Delta, \Delta')$ \label{line:increase-delta}
		\EndIf
	\EndFor
\EndFor

\State $x_{0,\psi} \gets x_{0,\psi} + 
		\Delta \cdot \gamma_{\psi} ~(\forall \psi \in \Psi)$
\end{algorithmic}

\end{breakablealgorithm}

\begin{breakablealgorithm}
\caption{Given CE game $(N, \Psi, r, b, d)$ and an NE $x$ resulting from  a
  failed run of Algorithm~\ref{algorithm:increase}, we reallocate evader resources among
active edges $A$
 using max flow.  This
strictly decreases $\theta_0$.}
\label{algorithm:max-flow}

\begin{algorithmic}[1]
\State $\bar \Delta \gets \min_{\psi \in \Psi: \theta_0 > \mu_{0,\psi}} 
		\theta_0 - \mu_{0,\psi}$
\For{each value of $\Delta$ in a binary search for the  max $\Delta \in [0,\bar \Delta]$ such that 
 $G$ (below) has a max flow saturating all edges from source $\sigma$}

\State $G \gets $ min-cost flow's residual graph in 
		Algorithm~\ref{algorithm:min-cost-flow}
\State Remove the edges connected to  $\sigma$ or  $\tau$ in $G$
\For {$\psi \in \Psi$}
	\If {$\psi \in \hat{B}_0$}
		\State Add edge $(\sigma, \psi)$ to $G$
		\State Capacity $\kappa(\sigma,\psi) \gets \Delta/d_{0,\psi}$
	\Else
		\State Add edge $(\psi, \tau)$ to $G$
		\If {$\mu_{0,\psi} \geq \theta_0$}
			\State Capacity $\kappa(\psi,\tau) \gets \infty$
		\Else
			\State $\kappa(\psi,\tau) \gets 
				(\theta_0-\Delta-\mu_{0,\psi})/d_{0,\psi}$ \label{line:decrease-kappa}
		\EndIf
	\EndIf
\EndFor
\LineCommentCont{Determine whether the max flow on $G$ saturates all edges
  from  $\sigma$ to see how to proceed with the binary search} 
\State{Run max flow on $G$}
\EndFor
\State $f \gets $ max $\sigma$-$\tau$ flow of $G$ for max feasible $\Delta$
\LineComment{If $f(\psi,i) > 0$, then $f(i,\psi)=-f(\psi,i)$}
\State $x_{i,\psi} \gets x_{i,\psi} + f(i,\psi) ~(\forall (i,\psi) \in A)$
\end{algorithmic}
\end{breakablealgorithm}
\end{fullversion}

\begin{lemma}
\label{lemma:no-change}
Algorithm 2 maintains Nash equilibrium without
changing $\mu_{i,\psi}$ or $\theta_i$ for any $i \in N^+$ and $\psi \in \Psi$.
As a result, the active edges $A$ are also unchanged.
\end{lemma}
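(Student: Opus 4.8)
The plan is to track exactly which quantities Algorithm 2 touches and to show that the two conservation properties built into its capacities force everything relevant to stay fixed. First I would observe that the procedure only rewrites $x_{i,\psi}$ on active edges $(i,\psi)\in A$, i.e.\ on sites with $\mu_{i,\psi}=\theta_i$; the catcher's strategy $x_{0,\psi}$ and every evader amount on a non-boundary site are left untouched. Since for $i\in N^+$ the per-resource utility $\mu_{i,\psi}=b_{i,\psi}+d_{i,\psi}x_{0,\psi}$ depends only on the (unchanged) catcher strategy $x_{0,\psi}$, the claim that no $\mu_{i,\psi}$ changes is immediate. The real content is therefore that the reallocation preserves the Nash conditions and the thresholds $\theta_i$.

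The key step is a flow-conservation argument. I would note that the pre-reallocation $x$, restricted to the active edges, is itself a feasible $\sigma$--$\tau$ flow in the graph built by Algorithm 2 that saturates every source edge $(\sigma,i)$ and every sink edge $(\psi,\tau)$: indeed $\kappa(\sigma,i)$ and $\kappa(\psi,\tau)$ are defined to be exactly the current marginals $\sum_{\psi\in B_i}x_{i,\psi}$ and $\sum_{i:\,\psi\in B_i}x_{i,\psi}$, and the total source and sink capacities both equal $\sum_{(i,\psi)\in A}x_{i,\psi}$. Hence the min-cost flow is a maximum flow of this same value and must likewise saturate all source and sink edges. This is the crux: saturation of the source edges keeps each evader's total mass on its boundary sites equal to $\kappa(\sigma,i)$ (and, with the untouched non-boundary amounts, keeps $\sum_\psi x_{i,\psi}=r_i$ feasible and each $x_{i,\psi}\le\ell_{i,\psi}$ via the edge capacity), while saturation of the sink edges keeps $\sum_{i:\,\psi\in B_i}x_{i,\psi}$ fixed. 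Combining the latter with the unchanged non-boundary contributions yields the essential invariant $x_{\Sigma,\psi}^{\mathrm{new}}=x_{\Sigma,\psi}^{\mathrm{old}}$ for every site $\psi$.

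From these invariants the remaining conclusions follow cleanly. The catcher still best-responds because $x_{0,\psi}$ is unchanged and, since $x_{\Sigma,\psi}$ is fixed, so are $\mu_{0,\psi}$ and $\theta_0$. Each evader still best-responds because on every high site ($\mu_{i,\psi}>\theta_i$) the amount remains at $\ell_{i,\psi}$ and on every low site ($\mu_{i,\psi}<\theta_i$) it remains $0$, both being non-boundary and hence untouched, while boundary sites carry no constraint. To see $\theta_i$ is unchanged I would use that $\kappa(\sigma,i)>0$ (the site attaining the old minimum $\theta_i$ has positive mass and lies in $B_i$), so after reallocation some boundary site still carries positive mass; every site with positive mass then has $\mu_{i,\psi}\ge\theta_i$, forcing $\min_{\psi:\,x_{i,\psi}>0}\mu_{i,\psi}=\theta_i$. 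Thus both players best-respond (NE is maintained) and all $\mu_{i,\psi},\theta_i$ are preserved; since $A=\{(i,\psi):\mu_{i,\psi}=\theta_i\}$ is a function only of these quantities, $A$ is unchanged as well. I expect the one genuine subtlety to be the saturation argument---making precise that a min-cost flow of the prescribed value inherits the saturation witnessed by the current strategy---rather than any of the downstream bookkeeping.
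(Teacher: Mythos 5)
Your proposal is correct and follows essentially the same route as the paper's proof: both argue that $x_{0,\psi}$ is untouched (so $\mu_{i,\psi}$ is fixed), that saturation of the sink edges preserves $x_{\Sigma,\psi}$ (so the catcher still best-responds), that reallocation only among active edges preserves the evaders' best responses, and that some positive boundary site survives so $\theta_i$ is unchanged. Your explicit justification that the min-cost flow must saturate the source and sink edges (because the original strategy witnesses a saturating feasible flow) is a slightly more careful rendering of a step the paper simply asserts, but it is the same argument.
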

\begin{proof}
  Algorithm 2 only reallocates $x_{i,\psi}$
  among $A$, hence the evaders necessarily continue to best respond.  Both
  the original flow and the min-cost flow are required to saturate all
  edges $(\psi, \tau)$. Hence $x_{\Sigma,\psi}$ is unchanged for all $\psi
  \in \Psi$ and the catcher necessarily continues to best respond.  Each
  evader $i$'s $\mu_{i,\psi}$ is clearly unchanged as $x_{0,\psi}$ is
  untouched by Algorithm 2.  By the definition
  of $\theta_i = \min_{\psi: x_{i, \psi} > 0} \mu_{i,\psi}$, the set of
  positive boundary sites $B_i^+$ must be non-empty, which means
  $\sum_{\psi \in B_i} x_{i,\psi} >0$.  So no matter how we reallocate,
  some $\psi \in B_i$ must remain positive.  Hence, $\theta_i$ is unchanged
  because the $\mu_{i,\psi}$ are unchanged.
\end{proof}

\begin{lemma}
\label{lemma:rate}
In Algorithm 3, the evaders' thresholds $\theta_i \
(i \in N^+)$ decrease at rate $\gamma_i$. That is, the algorithm decreases
$\theta_i$ by $\Delta \cdot \gamma_i$.
\end{lemma}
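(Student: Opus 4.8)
The plan is to reduce the whole statement to a single multiplicative ``rate identity'' between the increase rates $\gamma_\psi$ of the catcher's resources and the decrease rates $\gamma_i$ of the evaders' thresholds, and then read that identity off the shortest-path distances. The starting observation is that Algorithm 3 changes only $x_{0,\psi}$ (the evaders' resources are left untouched), and it raises each $x_{0,\psi}$ by $\Delta\gamma_\psi$. Since $\mu_{i,\psi} = b_{i,\psi} + d_{i,\psi}x_{0,\psi}$, this lowers $\mu_{i,\psi}$ by exactly $(-d_{i,\psi})\,\Delta\gamma_\psi$ (a genuine decrease, because $d_{i,\psi}<0$ and $\gamma_\psi,\Delta\ge 0$). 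So everything comes down to showing that on the boundary sites that define $\theta_i = \min_{\psi:\,x_{i,\psi}>0}\mu_{i,\psi}$ this drop equals $\Delta\gamma_i$, and that no positive site drops by more.

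The bridge between distances and rates is the choice $w(i,\psi)=\log(-d_{i,\psi})$ together with $\gamma_\psi = e^{-dist(\psi)}$ and $\gamma_i = e^{-dist(i)}$. First I would record the residual-graph structure: an active edge $(i,\psi)\in A$ contributes a forward arc $i\to\psi$ of cost $w(i,\psi)$ exactly when $x_{i,\psi}<\ell_{i,\psi}$, and a backward arc $\psi\to i$ of cost $-w(i,\psi)$ exactly when $x_{i,\psi}>0$. For every residual arc $u\to v$ the shortest-path distances satisfy $dist(v)\le dist(u)+\mathrm{cost}(u,v)$; applying $e^{-(\cdot)}$ turns these additive inequalities into multiplicative ones. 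Concretely, a backward arc $\psi\to i$ yields $(-d_{i,\psi})\gamma_\psi \le \gamma_i$, a forward arc $i\to\psi$ yields $(-d_{i,\psi})\gamma_\psi \ge \gamma_i$, and an interior active edge (both arcs present) forces equality $(-d_{i,\psi})\gamma_\psi = \gamma_i$.

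With these in hand I would pin down $\theta_i$ for a reachable evader $i$ (so $\gamma_i>0$). The last arc of a shortest path into $i$ must be a backward arc from some site $\psi^-$, since backward arcs are the only ones entering an evader node and they exist only on positive active edges; hence $\psi^-\in B_i^+$ and shortest-path tightness gives $(-d_{i,\psi^-})\gamma_{\psi^-}=\gamma_i$. Thus $\mu_{i,\psi^-}$ drops by exactly $\Delta\gamma_i$, and since $\psi^-$ stays positive this already forces $\theta_i^{\mathrm{new}}\le\theta_i-\Delta\gamma_i$. For the matching lower bound I would run over every remaining positive site: a positive \emph{active} site carries a backward arc, so $(-d_{i,\psi})\gamma_\psi\le\gamma_i$ and its $\mu$ drops by at most $\Delta\gamma_i$; an above-threshold (saturated, inactive) site has no residual arc, but the cap $\Delta' = (\mu_{i,\psi}-\theta_i)/(\gamma_\psi(-d_{i,\psi})-\gamma_i)$ in Algorithm 3 (line~\ref{line:new-active-constraint}) is precisely what keeps $\mu_{i,\psi}^{\mathrm{new}}\ge\theta_i-\Delta\gamma_i$. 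Together these give $\theta_i^{\mathrm{new}}\ge\theta_i-\Delta\gamma_i$, so $\theta_i$ decreases by exactly $\Delta\gamma_i$. The unreachable case $\gamma_i=0$ I would dispatch by noting that a reachable positive active site would make $i$ reachable, so all of $i$'s boundary sites are frozen, while the same $\Delta'$ cap stops any reachable saturated site from dipping below the unchanged threshold.

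The step I expect to be the real work, rather than a one-line computation, is this case analysis over site types: interior-boundary, saturated-boundary, below-threshold, and above-threshold sites each invoke a different one of the two residual arcs (or none, in which case the $\Delta$-cap must substitute), and the direction of each shortest-path inequality has to be matched to whether that site is \emph{allowed} to sit above or below the new threshold. The crux is confirming simultaneously that the minimum defining $\theta_i$ is attained on the shortest-path site $\psi^-$ and that no other positive site undercuts it, so that the decrease is exactly $\Delta\gamma_i$ and not merely bounded by it.
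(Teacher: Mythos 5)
Your proposal is correct and follows essentially the same route as the paper's proof: both convert the shortest-path relations $dist(i)=\min_{\psi\in B_i^{+}}\bigl(dist(\psi)-w(i,\psi)\bigr)$ (via the log-weights and $\gamma=e^{-dist}$) into the rate identity $(-d_{i,\psi})\gamma_\psi=\gamma_i$ on a tight backward arc, bound the other positive boundary sites by the corresponding residual-arc inequalities, and invoke the line~\ref{line:new-active-constraint} cap on $\Delta$ to keep non-active sites from interfering. Your write-up is somewhat more explicit about the two-sided bound and the unreachable ($\gamma_i=0$) case, but the argument is the same.
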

\begin{proof}
  Throughout this proof, we require $i \in N^+$.  We also denote by $\mu^0,
  \theta^0, A^0, B_i^0, B_i^{+0}$ the values of $\mu, \theta, A, B_i,
  B_i^+$ before increasing $x_{0,\psi}$.

Because by definition, $\theta_i = \min_{\psi: x_{i, \psi} > 0} \mu_{i,\psi}$,
clearly $\theta_i$ is unaffected by sites $\psi$ for which  $\mu^0_{i,\psi} < \theta^0_i$,
because $x_{i,\psi}=0$ by the
best-response property.
For $\psi$ where $\mu^0_{i,\psi} > \theta^0_i$, it will not affect $\theta_i$
for small enough $\Delta \leq \min_{\psi: \mu^0_{i,\psi} > \theta^0_i}
(\mu^0_{i,\psi}-\theta^0_i)/\gamma_\psi$.

Therefore, for small enough $\Delta$, the threshold $\theta_i$ is only
affected by sites $\psi$ where $\mu^0_{i,\psi} = \theta^0_i$, or
equivalently $(i,\psi) \in A^0$.  
Thus, for small enough $\Delta$:
$
\theta_i = \min_{\psi \in B^{+0}_i} \mu^0_{i,\psi} + \Delta \cdot
		\gamma_\psi \cdot d_{i,\psi}
	= \theta^0_i + \Delta \cdot \min_{\psi \in B^{+0}_i}
		\gamma_\psi \cdot d_{i,\psi}
	= \theta^0_i + \Delta \cdot \min_{\psi \in B^{+0}_i}
		e^{-dist(\psi)} \cdot (-e^{w(i,\psi)}) 
	= \theta^0_i - \Delta \cdot \max_{\psi \in B^{+0}_i}
		e^{-dist(\psi)+w(i,\psi)} 
	= \theta^0_i - \Delta \cdot 
		e^{-\min_{\psi \in B^{+0}_i} dist(\psi)-w(i,\psi)} 
$
Note that $\psi \in B^{+0}_i$ means that a backward edge $(\psi, i)$
exists in the residual graph with weight $-w(i,\psi)$. Those are
the only edges that lead to $i$, hence $dist(i) = \min_{\psi \in B^{+0}_i}
dist(\psi) - w(i,\psi)$. Therefore, $\theta_i = \theta^0_i - \Delta \cdot
e^{-dist(i)} = \theta^0_i - \Delta \cdot \gamma_i$ for small enough
$\Delta$.
All that remains to show is that $\Delta$ is in fact small enough.
This is so because line 25 of Algorithm 3
ensures that we stop decreasing before any $\psi \notin B^{+0}_i$ can
affect $\theta_i$.  Hence the lemma holds.
%
%
\end{proof}

\begin{lemma}
\label{lemma:increase-ne}
After Algorithm 2, if
Algorithm 3 successfully increases $x_{0,\psi}$, it
maintains Nash equilibrium.
\end{lemma}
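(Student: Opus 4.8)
The plan is to check the two best-response conditions separately, exploiting the fact that Algorithm~3 alters only the catcher's strategy: it raises $x_{0,\psi}$ by $\Delta\cdot\gamma_\psi$ on the reachable sites and leaves every evader allocation $x_{i,\psi}$ ($i\in N^+$) untouched. In particular $x_{\Sigma,\psi}$ does not move, so $\mu_{0,\psi}=b_{0,\psi}+d_{0,\psi}x_{\Sigma,\psi}$ is fixed throughout the step. The whole proof then reduces to (i) showing the catcher still best responds and (ii) showing each evader still best responds against the updated $x_{0,\psi}$.

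I would dispatch the catcher's condition first, as it is the easy half. Since $\mu_{0,\psi}$ is fixed and the test $\Psi_{reachable}\cap\Psi_{unincreasable}=\emptyset$ forces every site whose $x_{0,\psi}$ grows to lie in $\hat B_0\subseteq B_0$ (so $\mu_{0,\psi}=\theta_0$ there), resources are added only at boundary sites; sites with $\mu_{0,\psi}>\theta_0$ stay saturated and sites with $\mu_{0,\psi}<\theta_0$ stay at $0$. The saturation cap (line~\ref{line:saturate-constraint}) keeps each $x_{0,\psi}\le\ell_{0,\psi}$, and because the multiset $\{\mu_{0,\psi}\}$ is unchanged, $\theta_0$ remains a valid catcher threshold. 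Hence the catcher's sign pattern (saturate above threshold, zero below) is preserved.

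The substance of the argument, and where I expect the main obstacle, is the evaders' condition. By Lemma~\ref{lemma:rate} each $\theta_i$ falls at rate $\gamma_i=e^{-dist(i)}$, whereas for an active edge $(i,\psi)\in A$ the quantity $\mu_{i,\psi}$ falls at rate $\gamma_\psi\cdot(-d_{i,\psi})=e^{-dist(\psi)+w(i,\psi)}$. The key idea is to read the comparison of these two rates off the shortest-path optimality inequalities in the residual graph of Algorithm~2, and to note that which residual edges are present is exactly tied to whether $x_{i,\psi}$ equals $0$, is interior, or equals $\ell_{i,\psi}$. If $x_{i,\psi}<\ell_{i,\psi}$, the forward edge $i\to\psi$ (weight $w(i,\psi)$) is present, giving $dist(\psi)\le dist(i)+w(i,\psi)$, hence $\gamma_i\le\gamma_\psi(-d_{i,\psi})$; thus $\mu_{i,\psi}$ drops at least as fast as $\theta_i$, so $\mu_{i,\psi}\le\theta_i$ after the step, consistent with $x_{i,\psi}$ being unsaturated. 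If $x_{i,\psi}>0$, the backward edge $\psi\to i$ (weight $-w(i,\psi)$) is present, giving the reverse inequality and $\mu_{i,\psi}\ge\theta_i$, consistent with $x_{i,\psi}>0$. In the interior case both edges are present, forcing equality of the rates, so the edge stays active. The delicate bookkeeping is handling the \emph{inactive} edges $\mu_{i,\psi}\neq\theta_i$ simultaneously: here the cap $\Delta'=(\mu_{i,\psi}-\theta_i)/(\gamma_\psi(-d_{i,\psi})-\gamma_i)$ (line~\ref{line:new-active-constraint}) is precisely the value of $\Delta$ at which the gap $\mu_{i,\psi}-\theta_i$ would close, so $\Delta$ is stopped before any such edge can cross $\theta_i$ in the wrong direction; at worst such an edge becomes newly active, which never violates best response.

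Assembling the cases, every pair $(i,\psi)$ satisfies the threshold characterization of a best response for the updated strategy, so each evader still best responds; combined with the catcher's condition this gives that the profile remains a Nash equilibrium. I would close by flagging the one subtlety the reader should be warned about: $\theta_i$ could in principle begin to be governed by edges outside $B_i^{+0}$ once $\Delta$ is large, but this is exactly what the line~\ref{line:new-active-constraint} cap (and the ``small enough $\Delta$'' argument underlying Lemma~\ref{lemma:rate}) forbids, so the rate statement of Lemma~\ref{lemma:rate} is valid over the entire step and the case analysis above is justified.
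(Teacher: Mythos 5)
Your proposal is correct and follows essentially the same route as the paper's own proof: the catcher's side is dispatched by noting that only boundary sites in $\hat B_0$ receive additional resource while $x_{\Sigma,\psi}$ is unchanged, and the evaders' side is handled by comparing the decrease rate $\gamma_\psi(-d_{i,\psi})$ of $\mu_{i,\psi}$ against the rate $\gamma_i$ of $\theta_i$ via the shortest-path inequalities induced by the forward/backward residual edges, with the line-\ref{line:new-active-constraint} cap covering inactive edges. No substantive difference from the paper's argument.
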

\begin{proof}
  The catcher's strategy remains a best response because
  Algorithm 3 does not change any evader's strategy
  and the catcher only increases $x_{0,\psi}$ for which $\mu_{0,\psi} =
  \theta_0$.  Thus we only have to check whether each evader's strategy
  remains a best response.

By Lemma~\ref{lemma:rate} and the notation $\mu^0, \theta^0, A^0$ defined in its proof,
evaders are best-responding if and only if:
\begin{align*}
	&(\forall i \in N^+, \psi \in \Psi: x_{i,\psi} > 0)\\
		&~~\mu_{i,\psi} = \mu^0_{i,\psi}+\Delta \cdot \gamma_\psi \cdot d_{i,\psi}
		\geq \theta_i = \theta^0_i - \Delta \cdot \gamma_i\\
	&(\forall i \in N^+, \psi \in \Psi: x_{i,\psi} < \ell_{i,\psi})\\
		&~~\mu_{i,\psi} = \mu^0_{i,\psi}+\Delta \cdot \gamma_\psi \cdot d_{i,\psi}
		\leq \theta_i = \theta^0_i - \Delta \cdot \gamma_i
\end{align*}

For $(i,\psi) \notin A^0$, line 25 of
Algorithm 3 maintains the conditions above.  Now
consider $(i,\psi) \in A^0$. There, we have $\mu^0_{i,\psi} = \theta^0_i$, so
we only need to check
\begin{align*}
	&(\forall i \in N^+, \psi \in \Psi: x_{i,\psi} > 0)
		~~\Delta \cdot \gamma_\psi \cdot d_{i,\psi}
		\geq - \Delta \cdot \gamma_i\\
	&(\forall i \in N^+, \psi \in \Psi: x_{i,\psi} < \ell_{i,\psi})
		~~\Delta \cdot \gamma_\psi \cdot d_{i,\psi}
		\leq - \Delta \cdot \gamma_i
\end{align*}

If $x_{i,\psi} > 0$, a backward edge $(\psi, i)$
with weight $-w(i,\psi)$ exists in the residual graph. Hence $dist(i) \leq
dist(\psi) - w(i,\psi)$, which implies $-e^{-(dist(\psi) - w(i,\psi))} \geq
-e^{-dist(i)}$. That is, $\gamma_\psi d_{i,\psi} \geq -\gamma_i$, and therefore
$\Delta \cdot \gamma_\psi \cdot d_{i,\psi} \geq - \Delta \cdot \gamma_i$ because
$\Delta \geq 0$.

If $x_{i,\psi} < \ell_{i,\psi}$, a forward edge $(i, \psi)$
with weight $w(i,\psi)$ exists in the residual graph. Hence $dist(\psi) \leq
dist(i) + w(i,\psi)$, which implies $-e^{-(dist(\psi) - w(i,\psi))} \leq
-e^{-dist(i)}$. That is. $\gamma_\psi d_{i,\psi} \leq -\gamma_i$, and therefore
$\Delta \cdot \gamma_\psi \cdot d_{i,\psi} \leq - \Delta \cdot \gamma_i$ because
$\Delta \geq 0$, completing the proof.
%
%
%
%
%
\end{proof}

\begin{lemma}
\label{lemma:decrease-catcher}
If Algorithm 3 fails, then
Algorithm 4 strictly decreases $\theta_0$ while
maintaining Nash equilibrium.
\end{lemma}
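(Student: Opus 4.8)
The plan is to prove the two assertions separately: that Algorithm 4 preserves the Nash equilibrium, and that it strictly lowers $\theta_0$. The equilibrium-preservation part rests on the observation that Algorithm 4 never touches $x_{0,\psi}$; it only pushes a flow $f$ through the residual graph over the active edges $A$ and sets $x_{i,\psi} \gets x_{i,\psi} + f(i,\psi)$. Since the evader per-resource utilities $\mu_{i,\psi} = b_{i,\psi} + d_{i,\psi} x_{0,\psi}$ depend only on $x_{0,\psi}$, they are unchanged, and by exactly the argument of Lemma~\ref{lemma:no-change} each $\theta_i$ and the active-edge set are preserved. Because the flow lives only on $A$ and respects the residual capacities, every $x_{i,\psi}$ stays in $[0,\ell_{i,\psi}]$, and flow conservation at each evader node $i$ (internal to the network, as $\sigma$ and $\tau$ attach only to sites) keeps $\sum_\psi x_{i,\psi} = r_i$. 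Hence every evader continues to best respond.

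The substantive bookkeeping is to show the catcher still best responds against a threshold lowered by exactly $\Delta$. First I would translate the flow into its effect on $x_{\Sigma,\psi}$: conservation at a site $\psi\in\hat B_0$ forces $x_{\Sigma,\psi}$ to decrease by the source flow $\kappa(\sigma,\psi)=\Delta/d_{0,\psi}$, so $\mu_{0,\psi}$ drops by exactly $d_{0,\psi}\cdot(\Delta/d_{0,\psi})=\Delta$, from $\theta_0$ to $\theta_0-\Delta$; at a site $\psi\notin\hat B_0$ the internal flow equals the sink flow, so $x_{\Sigma,\psi}$ and hence $\mu_{0,\psi}$ can only rise. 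I would then check the three site types against the candidate new threshold $\theta_0-\Delta$: open boundary sites land at exactly $\theta_0-\Delta$; fully covered sites with $\mu_{0,\psi}\ge\theta_0$ get infinite sink capacity and only increase, staying above the new threshold while keeping $x_{0,\psi}=\ell_{0,\psi}$; and uncovered sites are capped by $\kappa(\psi,\tau)=(\theta_0-\Delta-\mu_{0,\psi})/d_{0,\psi}$, precisely the increase that keeps $\mu_{0,\psi}\le\theta_0-\Delta$ (nonnegative since $\Delta\le\bar\Delta$). As saturating all source edges pins at least one open site at $\theta_0-\Delta$ and leaves every open site with $\mu_{0,\psi}\le\theta_0-\Delta$, the new catcher threshold is exactly $\theta_0-\Delta$ and the catcher best-response conditions hold.

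The hard part is to guarantee $\Delta>0$, i.e. that the binary search returns a strictly positive value, and here the hypothesis that Algorithm 3 failed is essential. Algorithm 3 fails precisely when, for every $\psi\in\hat B_0$, the set of sites reachable from $\psi$ in the residual graph meets $\Psi\setminus\hat B_0$. I would argue this reachability makes full saturation of the source feasible for all sufficiently small $\Delta>0$ via a min-cut argument. As $\Delta\to0^+$ the source capacities $\Delta/d_{0,\psi}$ tend to $0$, while every internal residual capacity is fixed and every finite sink capacity tends to $(\theta_0-\mu_{0,\psi})/d_{0,\psi}>0$; hence any cut of total capacity below $\sum_{\psi\in\hat B_0}\Delta/d_{0,\psi}$ can cut no internal or sink edge, so it must leave some $\hat B_0$ site on the source side with no reachable sink, contradicting the failure condition. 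Thus the min cut equals the full source capacity for small $\Delta$, a saturating flow exists, and the binary-searched maximum feasible $\Delta$ is positive (and at most $\bar\Delta>0$), giving a strict decrease of $\theta_0$. I expect this min-cut translation of the reachability condition to be the main obstacle; matching the capacities to the best-response inequalities and the role of the $\bar\Delta$ cap is then routine.
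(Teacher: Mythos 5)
Your proof is correct and follows the same route as the paper's: the evaders stay in equilibrium because the flow only reallocates mass over active edges (the argument of Lemma~\ref{lemma:no-change}), the catcher's best response and the new threshold $\theta_0-\Delta$ are verified directly from the source and sink capacity choices, and the failure of Algorithm 3 supplies the residual-graph reachability that makes the rerouting possible. The one place you go beyond the paper is the min-cut argument showing that all source edges can be saturated for every sufficiently small $\Delta>0$; the paper merely asserts that a positive amount can be moved along each residual path from a site in $\hat B_0$ to one outside it, without addressing potential capacity conflicts among these paths, so your version is in fact the more rigorous justification of why the binary search returns $\Delta>0$.
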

\begin{proof}
  If Algorithm 3 fails, then for each $\psi \in \hat
  B_0$, there must be another site $\psi' \in \Psi_{reachable} \cap
  \Psi_{unincreasable}$. That is, for each site $\psi \in \hat B^0$ to
  which the catcher could increase resource assignment, there is a path in
  the residual graph that goes from $\psi$ to some site $\psi' \in
  \Psi_{unincreasable}$ to which the catcher cannot increase resource
  assignment.

  That site $\psi'$ is, in contrast, a good site for evaders: if they put
  more resources there, the catcher cannot penalize them (because the
  catcher cannot increase its resources there). Formally, there are two
  cases for $\psi'$: 1) $x_{0,\psi'} = \ell_{0,\psi'}$; 2) $x_{0,\psi'} <
  \theta_0$. In the former case, evaders can increase their resource
  assignment there as much as possible because the catcher has hit the
  limit of what it can assign there.  In the latter case, evaders can
  increase until $\mu_{0,\psi'}$ meets $\theta_0$.

  Therefore, for each $\psi \in \hat B_0$, evaders can move a positive
  amount of resource from that site $\psi$ to some $\psi' \notin B_0$ using
  the corresponding residual graph path. The evaders continue to
  best-respond because the residual graph only includes active edges
  $A$. The catcher's best-response condition is maintained by decreasing
  $\mu_{0,\psi}$ by the same positive amount $\Delta$ (the number found by the
  binary search in Algorithm 4) for each $\psi \in
  \hat B_0$ (saturating all edges leaving  $\sigma$), and not letting
  $\mu_{0,\psi} \  (\psi \in \hat B_0)$  decrease below $\mu_{0,\psi'} \
  (\psi' \notin \hat B_0)$ (line 14 of
  Algorithm 4). 
  Because $\mu_{0,\psi}$ has strictly decreased for each $\psi \in \hat
  B_0$ and has not become lower than any $\mu_{0,\psi'} \ (\psi' \notin
  \hat B_0)$, $\theta_0$ must have strictly decreased (by $\Delta$).
%
%
\end{proof}

\begin{lemma}
\label{lemma:fail-success}
After Algorithm 4, either a new site $\psi$ that
previously had $\mu_{0,\psi} < \theta_0$ now has $\mu_{0,\psi} = \theta_0$,
or the next run of Algorithm 3 will be successful.
\end{lemma}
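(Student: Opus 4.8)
The plan is to analyze what caps the value $\Delta$ returned by the binary search in Algorithm 4, and to show that whenever that cap is not a newly created boundary site it must be an internal flow bottleneck that hands Algorithm 3 a site to increase from. By Lemma~\ref{lemma:decrease-catcher} I may already assume Algorithm 4 finds a feasible $\Delta>0$, saturates every source edge $(\sigma,\psi)$ for $\psi\in\hat B_0$, and lowers the catcher threshold to $\theta_0^{\mathrm{new}}=\theta_0-\Delta$ while preserving equilibrium; so I only need to characterize the configuration it leaves behind. Throughout I track $\mu_{0,\psi}$ via $\mu_{0,\psi}=b_{0,\psi}+d_{0,\psi}x_{\Sigma,\psi}$, using that for a sink site the net flow $f(\psi,\tau)$ raises $x_{\Sigma,\psi}$, hence $\mu_{0,\psi}$, by exactly $d_{0,\psi}f(\psi,\tau)\ge 0$, while each $\psi\in\hat B_0$ has its $\mu_{0,\psi}$ lowered by exactly $\Delta$.

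First I would dispose of the cases that create a new boundary site. If the search stops at $\Delta=\bar\Delta=\min_{\psi:\,\mu_{0,\psi}<\theta_0}(\theta_0-\mu_{0,\psi})$, let $\hat\psi$ be the minimizer: its sink capacity $(\theta_0-\Delta-\mu_{0,\hat\psi})/d_{0,\hat\psi}$ is then $0$, so $f(\hat\psi,\tau)=0$, $\mu_{0,\hat\psi}$ is unchanged, and $\mu_{0,\hat\psi}=\theta_0-\bar\Delta=\theta_0^{\mathrm{new}}$; a site that was strictly below threshold now sits on the boundary, which is the first alternative of the lemma. Symmetrically, if any finite-capacity sink $\psi$ (necessarily with $\mu_{0,\psi}<\theta_0$, hence $x_{0,\psi}=0<\ell_{0,\psi}$) is saturated by the flow, then $\mu_{0,\psi}$ rises to exactly $\theta_0^{\mathrm{new}}$ and $\psi$ enters $\hat B_0$; again the first alternative holds.

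It remains to treat the complementary case, $\Delta<\bar\Delta$ with no finite sink saturated, where I would apply max-flow/min-cut to the network $G$ at the critical $\Delta$. Let $S$ be the set of nodes that cannot reach $\tau$ in the residual graph of the max flow; since the flow is maximal, $\sigma\in S$ and $(S,V\setminus S)$ is a min cut. Every unsaturated sink keeps a residual edge to $\tau$, and by assumption all finite sinks are unsaturated while the infinite-capacity sinks can never be saturated, so every sink site reaches $\tau$ and lies outside $S$; thus the only sites inside $S$ are sites of $\hat B_0$. Because the search cannot push $\Delta$ higher, not all source edges could be augmented, which forces at least one source site into $S$, i.e. $\hat B_0\cap S\neq\emptyset$. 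Fixing any $\psi_0\in\hat B_0\cap S$ and using that $S$ is closed under residual reachability, every node reachable from $\psi_0$ lies in $S$ and every such site lies in $\hat B_0$; equivalently $\psi_0$ reaches no site of $\Psi\setminus\hat B_0$. This is precisely the test $\Psi_{reachable}\cap\Psi_{unincreasable}=\emptyset$ that makes Algorithm 3 succeed from $\psi_0$ on the configuration Algorithm 4 produces.

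The main obstacle is the gap between ``the configuration Algorithm 4 produces'' and ``the configuration the next Algorithm 3 actually sees'': the main loop runs Algorithm 2 (min-cost flow) in between, reallocating the $x_{i,\psi}$ and hence changing the residual graph's edge set, even though by Lemma~\ref{lemma:no-change} it leaves $A$, every $\mu$, every $\theta_i$, and therefore $\hat B_0$ untouched. I would close this gap by showing the tight cut $(S,V\setminus S)$ survives any such reallocation: the resource that the $S$-side sites must hold, and which $\hat B_0\cap S$ can only push across the saturated internal cut, is pinned to the saturating values by the tightness of the cut together with the conservation of each $x_{\Sigma,\psi}$ and each $r_i$, so no feasible min-cost reallocation can open a residual $S$-to-$(V\setminus S)$ edge and $\psi_0$ still reaches only sites of $\hat B_0$. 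Proving this persistence rigorously---arguing that the cut acts as a tight Hall-type set that every feasible flow saturates identically---is the step I expect to require the most care.
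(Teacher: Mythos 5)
Your argument is correct and is essentially the paper's proof run in the direct rather than the contrapositive direction. The paper assumes the next run of Algorithm 3 fails, observes that the resulting residual paths from every site of $\hat B_0$ to an unincreasable site would have let the binary search push $\Delta$ higher unless some edge into $\tau$ were saturated, and then computes---exactly as you do---that a saturated sink site $\psi'$ ends up with $\mu_{0,\psi'}=\theta_0-\Delta$, i.e.\ on the new boundary; your min-cut extraction of a witness $\psi_0\in\hat B_0\cap S$ is just a more explicit rendering of that same criticality argument (and your separate treatment of $\Delta=\bar\Delta$ is subsumed in the paper's case of a zero-capacity sink edge being trivially saturated). The one point where you go beyond the paper is the issue you flag at the end: the residual graph the next Algorithm 3 actually inspects is that of the min-cost flow recomputed by Algorithm 2 in between, not the one Algorithm 4 leaves behind, and the paper silently identifies the two. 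Your Hall-type persistence claim is the right fix and does go through: with $S$ the set of evaders and sites that cannot reach $\tau$, the absence of residual edges leaving $S$ forces every active edge $(i,\psi)\in A$ with $i\in S$, $\psi\notin S$ to carry $\ell_{i,\psi}$ and every active edge with $i\notin S$, $\psi\in S$ to carry $0$, which pins $\sum_{\psi\in S\cap\Psi}x_{\Sigma,\psi}$ to its extreme value; since Algorithm 2 preserves each $x_{\Sigma,\psi}$ and each evader's total over active edges (Lemma~\ref{lemma:no-change}), any feasible reallocation must saturate exactly the same cut edges, so no residual edge out of $S$ can open and $\psi_0$'s reachable set remains inside $\hat B_0$. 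So the step you were worried about is not a gap in your approach---it is a genuine subtlety, it closes the way you sketched, and the paper's own proof simply does not address it.
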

\begin{proof}
  Suppose that the next run of Algorithm 3 fails.
  Then for each $\psi \in \hat B_0$, a path exists in the residual graph
  (after the run of Algorithm 4) from $\psi$ to some
  $\psi' \notin \hat B_0$, as argued in the proof of
  Lemma~\ref{lemma:decrease-catcher}. Suppose, for the sake of
  contradiction, that none of the edges entering $\tau$ were saturated
  during the run of Algorithm 4 (for the value of
  $\Delta$ resulting from the binary search), i.e., $(\forall
  \psi' \notin \hat B_0), f(\psi, \tau) < \kappa(\psi, \tau)$.  Then, in
  Algorithm 4, we could have increased $\Delta$
  further, resulting in a contradiction.
  Therefore, there exists at least one $\psi' \notin \hat B_0$ for which
  the run of Algorithm 4 made it the case that
  $f(\psi', \tau) = \kappa(\psi', \tau)$.
  For that $\psi'$, the total amount of evader resource $x_{\Sigma,\psi'}$
  was increased by $\kappa(\psi', \tau) =
  (\theta^0_0-\Delta-\mu^0_{0,\psi'})/d_{0,\psi'}$ (where superscript $0$
  denotes the value prior to the run of Algorithm
  4).  It follows that $\mu_{0,\psi} =
  \mu^0_{0,\psi} + \theta^0_0 - \Delta - \mu^0_{0,\psi} = \theta^0_0-\Delta
  = \theta_0$, while $\mu^0_{0,\psi} < \theta^0_0$.
\end{proof}

\begin{lemma}
\label{lemma:enter-once}
Each site $\psi$ enters $\hat B_0$ at most once; hence, $\hat B_0$ changes at
most $2|\Psi|$ times.
\end{lemma}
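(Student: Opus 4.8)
The plan is to prove the stronger structural fact that \emph{once a site leaves $\hat{B}_0$ it can never re-enter}; combined with the observation that each site's membership in $\hat{B}_0$ is then a single contiguous interval, this shows that each site triggers at most one addition and at most one removal, giving the bound of $2|\Psi|$ changes. So the whole argument reduces to controlling how a site can leave $\hat{B}_0$ and showing that every departure is permanent.

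First I would record two monotonicity facts that hold throughout the main loop. \textbf{(i)} The catcher's allocation $x_{0,\psi}$ is non-decreasing: Algorithms~2 and~4 only reallocate evader resources and never touch $x_{0,\psi}$, while Algorithm~3 updates $x_{0,\psi} \gets x_{0,\psi} + \Delta\gamma_\psi$ with $\Delta,\gamma_\psi \geq 0$. Consequently, once a site saturates ($x_{0,\psi} = \ell_{0,\psi}$) it stays saturated forever. \textbf{(ii)} In Algorithms~2 and~3 the evaders' strategies are unchanged (Lemma~\ref{lemma:no-change} for Algorithm~2), so $x_{\Sigma,\psi}$, and hence $\mu_{0,\psi} = b_{0,\psi} + d_{0,\psi}x_{\Sigma,\psi}$, is unchanged there; moreover $\theta_0 = \max_{\psi: x_{0,\psi} < \ell_{0,\psi}} \mu_{0,\psi}$ is non-increasing in these phases, since it is a maximum over a shrinking set of unsaturated sites with fixed $\mu_{0,\psi}$.

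Next I would do a phase-by-phase analysis of departures. Algorithm~2 changes neither $\mu_{0,\psi}$, $\theta_0$, nor $x_{0,\psi}$, so $\hat{B}_0$ is untouched. For Algorithm~3, a site $\psi \in \hat{B}_0$ has $x_{0,\psi} < \ell_{0,\psi}$ and $\mu_{0,\psi} = \theta_0$; if it does not saturate it remains unsaturated with the \emph{same} $\mu_{0,\psi}$, so it is still a candidate in the max defining the new $\theta_0$, which forces $\theta_0$ to stay equal to $\mu_{0,\psi}$ (it cannot rise, by fact (ii)). Hence $\psi$ stays in $\hat{B}_0$, and the only exit from $\hat{B}_0$ in Algorithm~3 is saturation, which is permanent by fact (i). For Algorithm~4, $x_{0,\psi}$ is untouched, and by the construction of the max-flow (the source-edge capacity $\kappa(\sigma,\psi) = \Delta/d_{0,\psi}$), every $\psi \in \hat{B}_0$ has its $x_{\Sigma,\psi}$ reduced so that $\mu_{0,\psi}$ drops by exactly $\Delta$, which is precisely the amount by which $\theta_0$ decreases (Lemma~\ref{lemma:decrease-catcher}); thus $\mu_{0,\psi}$ and $\theta_0$ fall in lockstep and $\psi$ remains a still-unsaturated boundary site, i.e., it stays in $\hat{B}_0$. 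So no site departs in Algorithm~4 either.

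Putting these together, the only departures happen by saturation in Algorithm~3, and saturation is irreversible, so a departed site never returns; this yields the single-interval membership and hence the ``enters at most once'' claim. Counting one addition and one removal per site (a site that starts in $\hat{B}_0$ contributes only a removal) gives at most $2|\Psi|$ changes. The main obstacle I anticipate is the Algorithm~4 case: I must verify carefully that the flow construction forces each $\hat{B}_0$ site's $\mu_{0,\psi}$ down by \emph{exactly} the same $\Delta$ by which $\theta_0$ decreases, so that these sites neither drop below the boundary (an incorrect departure) nor violate the NE conditions. This hinges on the capacity $\Delta/d_{0,\psi}$ on the source edges together with the fact that the max-flow saturates all edges leaving $\sigma$, which I would extract directly from the proofs of Lemmas~\ref{lemma:decrease-catcher} and~\ref{lemma:fail-success}.
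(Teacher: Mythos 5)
Your proof is correct and follows essentially the same route as the paper's: the paper likewise argues that a site can only leave $\hat B_0$ by saturation (since Algorithm~4 decreases $\mu_{0,\psi}$ in lockstep with $\theta_0$ for all $\psi\in\hat B_0$) and that saturation is permanent because $x_{0,\psi}$ never decreases. Your version is merely more explicit in the phase-by-phase bookkeeping than the paper's terser statement that only Algorithm~4 changes $\mu_{0,\psi}$ or $\theta_0$.
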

\begin{proof}
  Only Algorithm 4 can change $\mu_{0,\psi}$ or
  $\theta_0$. That algorithm ensures that $\mu_{0,\psi}$ decreases at the
  same rate for all $\psi \in \hat B_0$, and stops decreasing if a new
  $\psi'$ 
  $\mu_{0,\psi'} < \theta_0$ enters $\hat
  B_0$.  So a site $\psi$ can only leave $\hat B_0$ by being saturated
  ($x_{0,\psi} = \ell_{0,\psi}$). Since we never decrease $x_{0,\psi}$
  during the algorithm, saturated sites $\psi$ can never enter $\hat B_0$
  again.
\end{proof}

\begin{lemma}
\label{lemma:bound}
Algorithm 3 runs successfully at most $2 |\Psi| \cdot 3^{n|\Psi|}$
times.
\end{lemma}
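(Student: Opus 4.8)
The plan is to factor the target bound $2|\Psi| \cdot 3^{n|\Psi|}$ as a product of two independent counts: the number of \emph{phases} during which the catcher's open boundary $\hat B_0$ stays fixed, times the number of successful runs of Algorithm~3 that can occur inside a single phase. For the first count I would simply invoke Lemma~\ref{lemma:enter-once}: since each site enters $\hat B_0$ at most once, $\hat B_0$ changes at most $2|\Psi|$ times, so the execution decomposes into at most $2|\Psi|$ maximal intervals on which $\hat B_0$ is constant (the initial interval, before $\hat B_0$ first changes, is handled identically and folded into this count). It then suffices to show that each phase admits at most $3^{n|\Psi|}$ successful runs of Algorithm~3.

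To bound a single phase I would introduce a purely combinatorial \emph{state}: for every evader--site pair $(i,\psi)$ with $i \in N^+$, record which of the relations $\mu_{i,\psi} > \theta_i$, $\mu_{i,\psi} = \theta_i$, or $\mu_{i,\psi} < \theta_i$ holds (equivalently, whether $(i,\psi)$ is saturated, active, or empty in the best response). There are exactly $3^{n|\Psi|}$ such states. The first observation is that, within a phase, this state is driven \emph{only} by successful runs of Algorithm~3: by Lemma~\ref{lemma:no-change} Algorithm~2 leaves every $\mu_{i,\psi}$ and $\theta_i$ untouched, and, analogously, Algorithm~4 neither alters any $x_{0,\psi}$ (so each $\mu_{i,\psi}=b_{i,\psi}+d_{i,\psi}x_{0,\psi}$ is unchanged) nor moves evader mass off the active edges (so each $\theta_i$ is unchanged); hence both preserve the state. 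Moreover, each successful run genuinely changes the state, since by its stopping rule (the $\Delta'$ constraint) Algorithm~3 halts precisely when some previously inactive edge $(i,\psi)$ first reaches $\mu_{i,\psi}=\theta_i$.

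The crux is therefore to prove a \emph{no-recurrence} property: within one phase no state is visited twice, which immediately yields the per-phase bound of $3^{n|\Psi|}$. The natural strictly monotone potential is the catcher's allocated resource $\sum_{\psi} x_{0,\psi}$, which every successful run strictly increases by definition of success and which no other routine ever decreases. I would argue that, inside a fixed phase, recurrence of the state is incompatible with this strict increase. Concretely, if a state recurred then the active-edge set would coincide at both times and $\mu_{i,\psi}=\theta_i$ would hold along each active edge at both times; combining this with the rate identities of Lemmas~\ref{lemma:rate} and~\ref{lemma:increase-ne}---where the evolution of each $\mu_{i,\psi}-\theta_i$ is governed by the residual-graph distances $dist(\cdot)$ and the logarithmic costs $w(i,\psi)=\log(-d_{i,\psi})$---I would deduce that the net change in $\mu_{i,\psi}$ equals the net change in $\theta_i$ on every active edge, i.e.\ $|d_{i,\psi}|\,\Delta x_{0,\psi}=\Delta\theta_i$ within each connected component of the active-edge graph. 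Propagating this equality through each component, I expect to force the aggregate increase $\sum_\psi \Delta x_{0,\psi}$ between the two visits to be zero, contradicting the strict increase of the potential.

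I expect this last step---turning ``the classification recurs'' into ``$\sum_\psi x_{0,\psi}$ did not increase''---to be the main obstacle, because the classification alone does not pin down the full strategy profile: the active edges leave the evader masses (and hence $x_{\Sigma,\psi}$, and thus $\theta_0$) free, so a naive potential argument does not close. The argument must instead exploit the min-cost-flow structure guaranteed by Algorithm~2 (no negative cycle in the residual graph) together with the component equalities above to rule out any nontrivial increase that leaves all three-way relations intact. Once no-recurrence is established, multiplying the at most $2|\Psi|$ phases by the at most $3^{n|\Psi|}$ states per phase yields the claimed bound of $2|\Psi| \cdot 3^{n|\Psi|}$.
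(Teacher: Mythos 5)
Your decomposition into at most $2|\Psi|$ phases of constant $\hat B_0$ (via Lemma~\ref{lemma:enter-once}) times at most $3^{n|\Psi|}$ three-way classifications $\phi$ of the evader--site pairs is exactly the paper's, and your observations that Algorithms~2 and~4 preserve $\phi$ while each non-terminal successful run of Algorithm~3 within a phase activates a new edge are also correct. The gap is in the no-recurrence step, and it is not just a technical loose end: the potential you chose, $\sum_\psi x_{0,\psi}$, cannot deliver the contradiction. If $\phi$ recurs, then on each edge that is active at both visits you get $d_{i,\psi}\,\Delta x_{0,\psi}=\Delta\theta_i$; since $\theta_i$ only weakly decreases and $d_{i,\psi}<0$, this yields $\Delta x_{0,\psi}\ge 0$, which is perfectly consistent with a strict increase of $\sum_\psi x_{0,\psi}$ --- indeed the whole algorithm is built on $x_0$ increasing while the active structure and the equalities $\mu_{i,\psi}=\theta_i$ are preserved. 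So ``the classification recurs'' does not force ``$\sum_\psi x_{0,\psi}$ did not increase,'' and propagating the component equalities will not close the argument.

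The paper closes it with two ingredients you would need to add. First, it shows that, with $\hat B_0$ fixed, the reallocation produced by Algorithm~4 is a function of $\phi$ alone; combined with the fact that $\theta_0$ never increases, this forces every Algorithm~4 run between two visits of the same $\phi$ to have $\Delta=0$, hence $x_{\Sigma,\psi}$ is identical at the two visits. Second, with $x_{\Sigma,\psi}$ pinned down, it uses a lexicographic potential --- the cost $\sum_{i,\psi}x_{i,\psi}\log(-d_{i,\psi})$ of the flow in an extended graph $G_1$ in which all edges are present, tie-broken by the shortest-path distances $dist$ in the residual graph --- and shows that each newly activated edge either creates a negative cycle (strictly decreasing the flow cost) or strictly decreases $dist$. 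Both quantities are determined by $\phi$ once $x_{\Sigma,\psi}$ is fixed, so $\phi$ cannot repeat. You correctly flagged that ``the classification alone does not pin down the full strategy profile'' as the obstacle, but the resolution is the determinism-of-Algorithm-4 argument plus the min-cost-flow potential, not the catcher-resource potential; as written, your proof does not go through.
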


\begin{proof}
By Lemma~\ref{lemma:enter-once}, we only have to argue that there are at most
$3^{n|\Psi|}$ successful runs before $\hat B_0$ changes. Now we assume that
$\hat B_0$ remains unchanged and check how many runs there can be.

We classify an edge $(i, \psi)$ where $i \in N^+, \psi \in \Psi$ into 3 cases:
either (1) superior $\mu_{i,\psi} > \theta_i$ (above threshold), or (2) inferior
$\mu_{i,\psi} < \theta_i$ (below threshold), or (3) active $\mu_{i,\psi} =
\theta_i$ (on threshold). 


Let $\phi$ be a vector of length $n|\Psi|$ where each element $\phi_e \in \{1,
2, 3\}$ denotes edge $e$'s case number. We will show that $\phi$ changes after
each successful run, and it will not repeat if $\hat B_0$ remains unchanged.
Hence the lemma holds.

We first show that for a fixed $\phi$, Algorithm 4 always
returns the same $x$ (assuming that $\hat B_0$ remains unchanged).

Recall that in Algorithm 4, we proved that if the final
flow saturates any edge $(\psi', \tau)$ that enters sink $\tau$, then $\psi'$
will newly enter $\hat B_0$. Hence if $\hat B_0$ remains unchanged, we can
ignore the capacity of those edges entering $\tau$. Also, with $\hat B_0$ fixed,
the set of edges leaving source $\sigma$ and their capacities are fixed.
Therefore, the resulting $x$ of Algorithm 4 is solely
determined by the edges between $N^+$ and $\Psi$, which is fixed by
$\phi$.\footnote{The residual graph might be different for the same $\phi$,
depending on what the min-cost flow is; but $x$ is the additional max-flow
applied to the min-cost flow, so what really determines $x$ is $\phi$.}

We then conclude that if there were two Algorithm 3 runs
that have the same resulting $\phi$, then between those two runs, there must be
no Algorithm 4 run that has positive $\Delta$ which
strictly decreases $\theta_0$.  Otherwise, we would have two
Algorithm 4 runs (after those two
Algorithm 3 runs) with the same $\phi$, where the latter
run has strictly smaller $\theta_0$ (note that $\theta_0$ never increases),
contradicting that the returning $x$ of Algorithm 4 is
completely determined by $\phi$.

Therefore, if there were two Algorithm 3 runs that result in
the same $\phi$, all Algorithm 4 runs between those two runs
must do nothing ($\Delta = 0$). Hence, $x_{\Sigma,\psi}$ is unchanged between those
two runs for all $\psi$, because only Algorithm 4 can
change $x_{\Sigma,\psi}$. 

Now consider graph $G_1$ which extends graph $G$ in the
Algorithm 2 by assuming that all edges are active
($A = N^+ \times \Psi$).  That is, for each edge leaving
source $\sigma$, its capacity is $\kappa(\sigma, i) = r_i$; for each edge
entering sink $\tau$, its capacity is $\kappa(\psi, \tau) = x_{\Sigma,\psi}$; the
weight and capacity of edge $(i, \psi)$ is $w(i,\psi) = \log(-d_{i,\psi}),
\kappa(i,\psi) = \ell(i,\psi)$ for all $i \in N^+, \psi \in \Psi$.

Then we consider the original set of active edges $A$ and make $G_2$ by
revising the following edges in $G_1$: for each edge $e = (i,\psi)$ that is not
active, set its weight $w(i,\psi) = \infty$ if $e$ is inferior, and
$w(i,\psi) = -\infty$ if $e$ is superior.

Clearly, running min-cost flow on $G_2$ would result in the same $x$ as running
Algorithm 2 because we fix non-active
edges' flow by setting their weights to $\infty$ or $-\infty$. Moreover, for the same
flow, the shortest path in the residual graph of $G_2$ is exactly the same as
the residual graph of $G$. Hence when we talk about flow or distance, they
could refer to either $G$ or $G_2$. But when we talk about the total cost
of the flow, we are referring to $G_1$, as $G_2$ has infinity cost edges and
$G$ only has active edges. That is, the cost of a flow $x$ is
$\sum_{i \in N^+, \psi \in \Psi^+} x_{i,\psi} \log(-d_{i,\psi})$

Each time that Algorithm 3 runs successfully but the
Algorithm 1 continues ($r_0$ is not used up), some
constraint of line 23 in
Algorithm 3 must be tight, which means that a new edge
$(i,\psi)$ must be entering the active edge set $A$ (otherwise we will either
continue increasing $x_{0,\psi}$ or change $\hat B_0$). 
For that newly active edge, if $x_{i,\psi} = 0$, then $dist(\psi) > dist(i) +
w(i,\psi)$ must be true (recall that $dist$ is the shortest distance from
$\psi^*$ in the residual graph) because: $\mu_{i,\psi}$'s decrease rate
$\gamma_\psi (-d_{i,\psi})$ must be slower than $\theta_i$'s decrease rate
$\gamma_i$. Similarly, if $x_{i,\psi} = \ell_{i,\psi}$, then
$dist(i) > dist(\psi) - w(i,\psi)$ must be true. Hence by adding $(i,\psi)$,
either $dist$ has to decrease or a negative cycle exists which leads to a
decrease in flow cost (w.r.t. $G_1$).

Also note that $dist$ and flow cost are completely determined by $\phi$
if $x_{\Sigma,\psi}$ is fixed for all $\psi \in \Psi$. Hence $\phi$ cannot
repeat since each $\phi$ change has either to either decrease flow cost,
or maintain the flow cost and decrease $dist$. This completes our proof.
\end{proof}

With this, we obtain an
exponential upper bound on the algorithm's runtime.  Because the algorithm
only terminates when the number of catcher resources has reached $r_0$, and
we have shown that the algorithm maintains equilibrium throughout, this
also establishes the algorithm's correctness.

\begin{theorem}
\label{theorem:bound}
Algorithm 1 computes a Nash equilibrium of the
given CE game in $2|\Psi| + 4|\Psi|3^{n|\Psi|}$ iterations.
\end{theorem}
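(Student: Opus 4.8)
The plan is to establish two things in turn: that Algorithm~\ref{algorithm:main} returns a genuine Nash equilibrium, and that it does so within the stated number of while-loop iterations. For correctness I would first note that the initialization produces a valid NE in which the catcher holds no resources: each evader $i$ is assigned to its $\mu$-maximizing sites, which is a best response when $x_{0,\psi}=0$, and no best-response condition needs checking for the catcher since it holds nothing. I would then invoke the invariant that every subprocedure preserves equilibrium---Algorithm~\ref{algorithm:min-cost-flow} by Lemma~\ref{lemma:no-change}, a successful Algorithm~\ref{algorithm:increase} by Lemma~\ref{lemma:increase-ne}, and Algorithm~\ref{algorithm:max-flow} by Lemma~\ref{lemma:decrease-catcher}. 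Since the loop exits precisely when $\sum_{\psi} x_{0,\psi} = r_0$, the returned profile is feasible for the catcher and is an NE.

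For the iteration count I would classify each pass through the while body by the outcome of Algorithm~\ref{algorithm:increase}. Let $S$ be the number of passes in which it succeeds and $F$ the number in which it fails (each such pass then invokes Algorithm~\ref{algorithm:max-flow}); the total iteration count is $S + F$. Lemma~\ref{lemma:bound} gives $S \le 2|\Psi|\cdot 3^{n|\Psi|}$ directly, so the remaining work is to bound $F$.

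To bound $F$ I would appeal to Lemma~\ref{lemma:fail-success}: after each failed pass (hence after its run of Algorithm~\ref{algorithm:max-flow}), either a site previously satisfying $\mu_{0,\psi} < \theta_0$ newly attains $\mu_{0,\psi} = \theta_0$---so $\hat B_0$ changes---or the very next run of Algorithm~\ref{algorithm:increase} succeeds. I would charge each failed pass accordingly: passes of the first kind to a change of $\hat B_0$, of which there are at most $2|\Psi|$ by Lemma~\ref{lemma:enter-once}; passes of the second kind to the successful pass that immediately follows. Since distinct failed passes are immediately followed by distinct successful passes, there are at most $S$ of the latter. This yields $F \le S + 2|\Psi|$, whence the total is $S + F \le 2S + 2|\Psi| \le 4|\Psi|\cdot 3^{n|\Psi|} + 2|\Psi|$, exactly the claimed bound.

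The main obstacle is the bound on $F$ rather than on $S$, since the heavy combinatorial content---the $3^{n|\Psi|}$ factor counting distinct edge-case vectors $\phi$---is already supplied by Lemma~\ref{lemma:bound}. The delicate point is verifying that the charging scheme for failed passes is valid: a failed pass not triggering a change in $\hat B_0$ must be \emph{immediately} succeeded by a successful run, so that the map from such failed passes to their following successes is injective and no pass is double-counted. One should also confirm that a single failed pass cannot be the sole cause of two $\hat B_0$ changes, which follows because Algorithm~\ref{algorithm:max-flow} lowers $\mu_{0,\psi}$ at a common rate across $\hat B_0$ and halts as soon as one new site reaches $\theta_0$.
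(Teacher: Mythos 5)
Your proposal is correct and follows essentially the same route as the paper's proof: correctness via the equilibrium-maintenance lemmas, $S \le 2|\Psi|\cdot 3^{n|\Psi|}$ from Lemma~\ref{lemma:bound}, and $F \le S + 2|\Psi|$ by charging each failed iteration either to a change of $\hat B_0$ (Lemmas~\ref{lemma:fail-success} and~\ref{lemma:enter-once}) or to the immediately following successful iteration. Your write-up is in fact slightly more careful than the paper's, which leaves the injectivity of the charging scheme and the validity of the initialization implicit.
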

\begin{proof}
  Because of Lemmas~\ref{lemma:no-change}, \ref{lemma:increase-ne}, and
  \ref{lemma:decrease-catcher}, Nash equilibrium is always maintained. So
  we only need to prove that the algorithm stops after at most $2|\Psi| +
  4|\Psi|3^{n|\Psi|}$ iterations. We have at most $2|\Psi|3^{n|\Psi|}$
  iterations where Algorithm 3 runs successfully, by
  Lemma~\ref{lemma:bound}.  Each failed iteration must either be followed by a
  successful iteration, or $\hat B_0$ is changed (by
  Lemma~\ref{lemma:fail-success}). Lemma~\ref{lemma:enter-once} ensures that
  $\hat B_0$ can be changed at most $2|\Psi|$ times. So overall there can be
  at most $2 \times 2|\Psi| 3^{n|\Psi|} + 2|\Psi|$ iterations.
\end{proof}

\subsection{Experiments}

Although Theorem~\ref{theorem:bound} only gives an exponential bound on the
number of iterations, the number of iterations in
Algorithm 1 grows much more slowly---about
linearly---in our experiments, as shown in
Figure~\ref{fig:exp1}. In our experiments, parameters $r$, $b$, and
$d$ are generated
uniformly at random from $\{1,\ldots,10\}$
 (or $\{-10,\ldots,-1\}$).
 Each instance of size $n$ has $n$ evaders and $n$ sites; for
 each $n$ we solved $20$ instances.
 
The running time per iteration is provably polynomial and
it grows about cubically
as Figure~\ref{fig:iter_time} shows. That is consistent with
how the network flow subroutine (which is used in each of our iterations)
typically scales.

 An alternative approach to solving for these Nash equilibria would be
 to construct the normal form of the game and use a standard NE-finding
 algorithm.  This approach, however, is doomed regardless of the
 precise choice of algorithm, because the size of the normal form blows
 up exponentially, as shown in
 Figure~\ref{fig:exp2}.

Note that we implemented our algorithm completely in Python
(including the min-cost network flow subroutine).
Hence there is room
to further improve the performance by using C/C++, and/or some optimized
network flow libraries.

%

\section{Future Research}

The obvious direction for future research is resolving whether our
algorithm in fact provably runs in polynomial time---and, if not,
whether there is another algorithm that does.  The algorithm's success
in experiments gives us hope that the answer to at least one of these
two questions is positive, but we have not been able to decisively
answer them.  There are several indications that the question is
inherently difficult to answer.  The earlier algorithm for multiple
attacker resources in the non-Bayesian case and the proof of the
polynomial bound on its runtime~\cite{Korzhyk11:Security} were
already quite involved, and we showed that the Bayesian case requires
us to deal with additional challenging issues
(Subsection ``Reducing from Min-Cost Matching'').  Still, we believe that the importance
of solving Bayesian security games would justify the devotion of
further effort to resolving this question, as well as to extending
these techniques to related problems.

\section*{Acknowledgments}
We thank Ronald Parr for his contributions to our early discussions about
Bayesian security games.
We are also thankful for support from ARO under grants W911NF-12-1-0550 and
W911NF-11-1-0332, NSF under awards IIS-0953756, IIS-1527434, CCF-1101659, and CCF-1337215, and a Guggenheim Fellowship. 
Part of this research was done while Conitzer was visiting the Simons Institute for the Theory of Computing.

%

%
%
%

\end{document}